\newtheorem{theorem}{{\bf Theorem}}
\newtheorem{lemma}{{\bf Lemma}}
\newtheorem{corollary}{{\bf Corollary}}
\newtheorem*{ass:betatheta}{{\bf Assumption \ref{ass:betatheta}}}
\begin{document}
%
\title{Minimizing Age of Information in Spatially Distributed Random Access Wireless Networks}
\author{Nicholas Jones and Eytan Modiano\\
Laboratory for Information and Decision Systems, MIT\\
jonesn@mit.edu,  modiano@mit.edu
\thanks{This work was supported by NSF Grant CNS-1713725 and by Army Research Office (ARO) grant number W911NF-17-1-0508.}
\vspace{-3mm}
}

\IEEEaftertitletext{\vspace{-0.6\baselineskip}}
\maketitle

\begin{abstract}

We analyze Age of Information (AoI) in wireless networks where nodes use a spatially adaptive random access scheme to send status updates to a central base station. We show that the set of achievable AoI in this setting is convex, and design policies to minimize weighted sum, min-max, and proportionally fair AoI by setting transmission probabilities as a function of node locations. We show that under the capture model, when the spatial topology of the network is considered, AoI can be significantly improved, and we obtain tight performance bounds on weighted sum and min-max AoI. Finally, we design a policy where each node sets its transmission probability based only on its own distance from the base station, when it does not know the positions of other nodes, and show that it converges to the optimal proportionally fair policy as the size of the network goes to infinity.
\end{abstract}


%
\IEEEpeerreviewmaketitle

\section{Introduction}

Wireless networks emerged as a primary way to enable internet connectivity and communication between individuals. As these networks have become more sophisticated, their capacity has increased to allow higher levels of data throughput. However, with the rise of connected devices and the Internet of Things (IoT), the purpose and design of some networks is changing. In contrast to networks of human users, IoT networks generally have lower data rates but stricter latency requirements. Instead of data intensive uses like video streaming, these networks commonly involve small update packets which are time sensitive.

Age of Information (AoI) was introduced to provide a metric for information freshness in networks \cite{kaul2012real} and measures the time elapsed since the last received update from a source was generated. Significant work has been done in designing scheduling policies to minimize AoI, and the results in these areas are quite mature \cite{sun2017update, kadota2018scheduling, kadota2019scheduling, talak2020optimizing, talak2018distributed, tripathi2019whittle}.

All of the works above focus on centralized scheduling policies, and less work has been done to minimize AoI in random access channels, where nodes decide whether to access the channel in a completely decentralized way. In \cite{yates2017status}, the authors analyze AoI with slotted ALOHA and find the optimal attempt probabilities to minimize weighted sum AoI. In \cite{kadota2021age}, the authors analyze random access networks with stochastic arrivals and optimize the transmission probabilities for both slotted ALOHA and CSMA. In \cite{maatouk2020age}, the authors analyze AoI in CSMA and optimize the back-off timers to minimize Age when updates are generated at will. 

In \cite{chen2022age, yavascan2021analysis}, the authors propose an AoI threshold below which nodes will never access the channel. When their AoI exceeds the threshold, they participate in slotted ALOHA, and otherwise remain silent to prioritize nodes with larger AoI. In \cite{ahmetoglu2022mista}, a hybrid between ALOHA and CSMA is proposed, where nodes use an Age threshold coupled with carrier sensing.

All of these works assume a collision interference model, meaning that if multiple nodes access the channel simultaneously none of them succeed. When nodes are distributed in space, this is often an oversimplification. At the physical layer, wireless signals attenuate over distance, so transmissions from varying distances will be perceived differently at the receiver. When two nodes, one close by and one far away, transmit simultaneously, it is common for the farther node's transmission to be drowned out and for the closer node's transmission to succeed, contrary to what's predicted in the collision model. This leads to unfairness in the network, where farther nodes can experience much larger AoI on average. This is especially true in CSMA-based protocols like 802.11 \cite{bianchi2000performance}, where (i) sensing the channel may fail to detect farther nodes' transmissions, and (ii) successes cause nodes to transmit more frequently and failures cause them to transmit less frequently. This creates a positive feedback loop where closer nodes transmit more often, amplifying the spatial unfairness. When applied to human users, this phenomenon may not be catastrophic, but in an IoT network with critical time-sensitive updates, this unfairness can result in safety or reliability issues.

A commonly used model which accounts for spatial effects is the capture model, and some work has been done under this model analyzing random access networks in regards to throughput. In \cite{celik2009mac}, the authors highlight the spatial unfairness in CSMA-based protocols specifically with multipacket reception, and design a protocol which counter-intuitively backs off after a success, sharing a similar flavor to threshold ALOHA.

In \cite{baccelli2013adaptive}, the authors analyze what they call an adaptive spatial ALOHA protocol, which operates like slotted ALOHA but allows nodes to transmit with varying probabilities based on their spatial location. They optimize this protocol to maximize sum, max-min, and proportionally fair throughput, both with full topology information and without. They extend these results in \cite{baccelli2014analysis}, where they introduce the notion of a stopping set, or limited topology information. Following a similar line, \cite{yang2020optimizing} derives a spatial ALOHA policy to minimize sum AoI in a network of transmitter-receiver pairs using stochastic geometry and the same idea of stopping sets. In \cite{mankar2021spatial}, the authors find a spatial distribution for the mean Peak AoI in a network of source-destination pairs.

In this work, we analyze AoI in a network of sources sending status updates to a central base station. These sources use a random access policy, which sets transmission probabilities as a function of node locations. We show that the set of achievable AoI in this setting is convex, and derive policies to minimize weighted sum, min-max, and proportionally fair AoI. We show that under the capture model, when node locations are considered, AoI can be significantly improved, and the spatial unfairness of traditional random access (c.f. Figure \ref{fig:h_comp_vs_r}) can be reduced or eliminated. We derive tight performance bounds on weighted sum and min-max AoI, which also provide fairness guarantees. Finally, we design a topology-agnostic policy, where each node sets its transmission probability based only on its own location, and show that it converges to the optimal proportionally fair policy as the size of the network goes to infinity.


The remainder of the paper is organized as follows. In Section \ref{sec:model} we introduce the system model and AoI more formally. In Section \ref{sec:achievability} we analyze the achievable AoI region. In Section \ref{sec:perfectinfo} we design policies using perfect topology information and derive performance bounds. In Section \ref{sec:unknownconf} we derive a topology-agnostic policy and show asymptotic results. Finally, in Section \ref{sec:simulations} we show simulations supporting our results.

\section{System Model}
\label{sec:model}

We consider a network of $N$ devices, each sending status updates wirelessly to a central base station. Each device, or node, is located at some location in two dimensional space, and we assume that these locations are fixed over the time horizon of interest. Let $r_i$ denote the distance between node $i$ and the base station, where nodes are located inside a circle of a fixed radius. Without loss of generality, we scale distances so that $r_i$ takes values from 0 to 1, and refer to the vector of distances $\boldsymbol{r}$ as the \textit{position vector}.

We assume that update packets have a fixed length, and that time is broken into discrete slots. The duration of a slot is the time required to send one packet. Nodes operate in a random access fashion where node $i$ attempts transmission in each slot according to a Bernoulli process with probability $p_i$, independent of other nodes and across time slots. The process for each node is stationary, so the vector $\boldsymbol{p}$ is fixed across time, but its components can vary between nodes. The network is saturated, meaning nodes sample their process in every time slot and always have an update to send.

As in many wireless systems, a single communication channel is shared by the nodes. When two or more nodes try to access the channel in the same time slot, their signals will interfere and one or more of the transmissions may fail. Because nodes transmit according to a random process, the interference and success of transmissions is also a random process. Let $\tau_i$ be the success probability for node $i$ in a given slot and note that it is stationary under the random access model.

The success probability $\tau_i$ under the capture model is defined as the probability that the signal to interference plus noise ratio is larger than a known threshold $\theta$. Assuming that every node transmits at the same unit power level, the signal strength seen at the base station is a function of the signal attenuation over distance, with roll-off parameter $\beta$, and Rayleigh fading modeled as a random variable $K^2 \sim Exp(1)$. We assume noise is negligible relative to interference, so that the network operates in the interference limited regime, and consider only the signal to interference (SIR) ratio. This does not change any fundamental results on achieveability and provides cleaner analysis throughout.

Under these assumptions, the success probability of node $i$ is the product of its attempt probability and conditional success probability, because each node transmits independently. This can be expressed as
\begin{equation}
\label{eq:throughput1}
    \tau_i = p_i \cdot \mathbb{P}[SIR_i > \theta] = p_i \cdot \mathbb{P} \Bigg[\frac{r_i^{-\beta} K_i^2}{\sum_{j \in I_i} r_j^{-\beta} K_j^2 V_j} > \theta \Bigg],
\end{equation}

where $V_j \sim Ber(p_j)$ is a random variable indicating whether node $j$ transmits in the current time slot, and $I_i$ is the set of nodes which interfere with node $i$, here assumed to be every other node in the network. Following the same procedure as \cite{baccelli2013adaptive}, by conditioning on the Rayleigh fading and Bernoulli transmissions of interferers, and using the complementary CDF of the exponential distribution,
\begin{align}
\begin{aligned}
    \hspace{-1pt} \mathbb{P}[SIR_i > \theta \ | \ V_j, K_j^2, \ \forall j \in I_i] &= e^{-\theta r_i^{\beta} \sum_{j \in I_i} (r_j^{-\beta} K_j^2 V_j)} \\
    &= \prod_{j \in I_i} e^{-\theta r_i^{\beta} r_j^{-\beta} K_j^2 V_j}.
\end{aligned}
\end{align}

Averaging over the Rayleigh fading,
\begin{equation}
    \mathbb{P}[SIR_i > \theta \ | \ V_j, \ \forall j \in I_i] = \prod_{j \in I_i} \frac{1}{1 + \theta r_i^{\beta} V_j r_j^{-\beta}}.
\end{equation}

Now, averaging over the Bernoulli transmissions,
\begin{equation}
    \mathbb{P}[SIR_i > \theta] = \prod_{j \in I_i} \bigg(1 - \frac{p_j}{1 + r_j^{\beta}/r_i^{\beta} \theta} \bigg).
\end{equation}

Finally, by plugging this result back into \eqref{eq:throughput1},
\begin{equation}
\label{successprob}
    \tau_i = p_i \prod_{j \in I_i} \bigg(1 - \frac{p_j}{1 + r_j^{\beta}/r_i^{\beta} \theta} \bigg) = p_i \prod_{j \in I_i} \bigg(1 - \frac{p_j}{1 + d_{ij}} \bigg)
\end{equation}

where $d_{ij} \triangleq r_j^{\beta}/(r_i^{\beta} \theta)$.

We now formally define the Age of Information of node $i$ at time $t$ as $A_i(t)$. This quantity measures the time since the last update received by the base station from node $i$ was generated. Because nodes always generate fresh updates, each packet is generated at the beginning of the time slot in which it is received, and the AoI of node $i$ evolves as
\begin{equation}
    A_i(t+1) = \begin{cases}
        A_i(t) + 1, & \text{if } s_i(t) \neq 1 \\
        1, & \text{if } s_i(t) = 1
    \end{cases}
\end{equation}

where $s_i(t) = 1$ if node $i$ successfully transmits in slot $t$.

\begin{figure}
    \centering
    \includegraphics[width=0.4\textwidth]{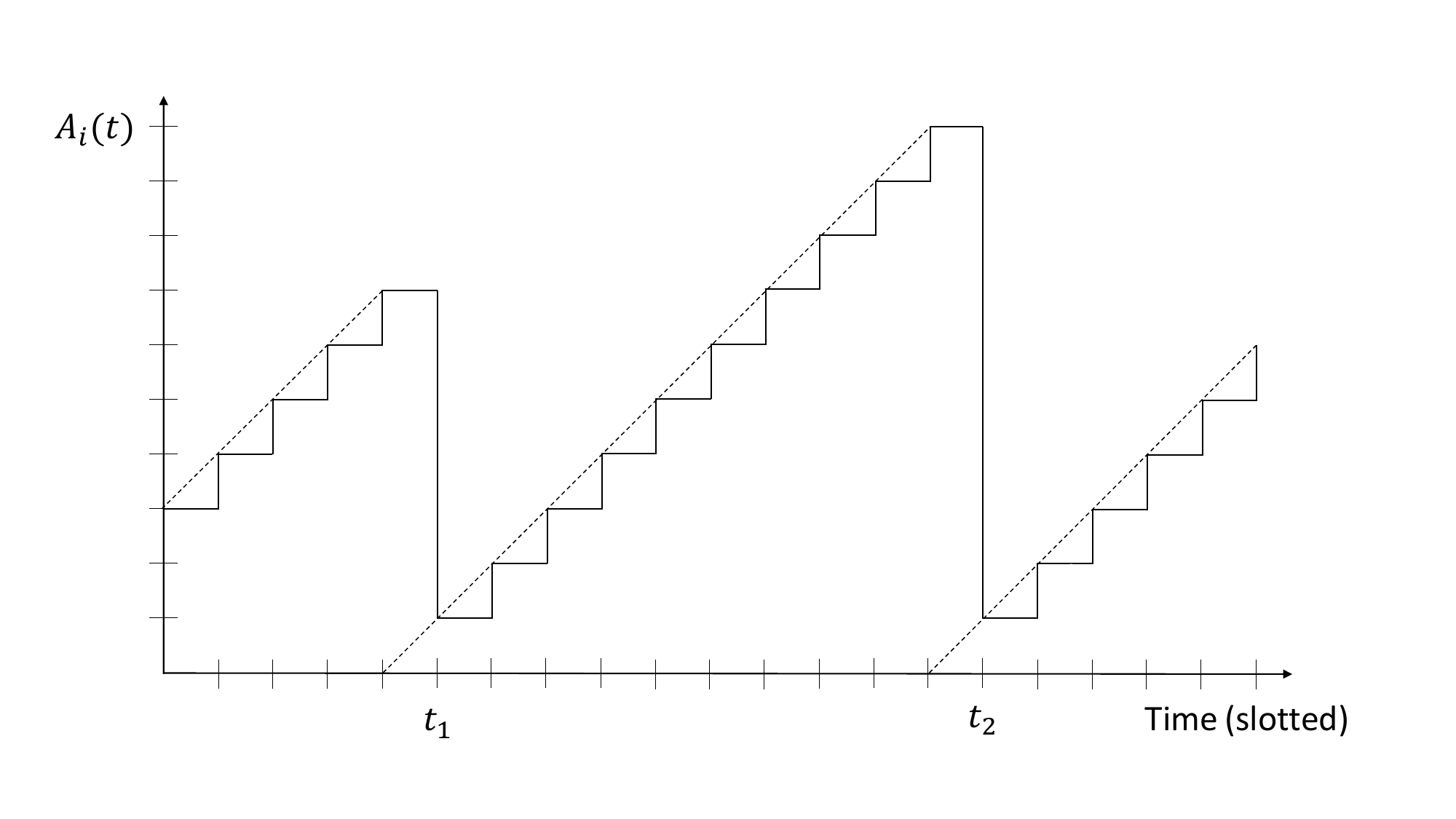}
    \caption{Evolution of the AoI of node $i$ with successes at $t_1$ and $t_2$}
    \label{fig:my_label}
\end{figure}

The infinite time average expected AoI is defined as
\begin{equation}
    h_i \triangleq \lim_{T \to \infty} \frac{1}{T} \sum_{t=1}^T \mathbb{E}[A_i(t)].
\end{equation}

The success probability of each node $i$ is iid Bernoulli with parameter $\tau_i$, so the inter-arrival time $X_i$ between successes is iid Geometric with parameter $\tau_i$. In \cite{kadota2018scheduling}, the authors show that the AoI process is a Renewal process, and from the Renewal Reward theorem \cite[Sec~5.7]{gallager2013stochastic},
\begin{align}
\label{eq:ageofi}
    \begin{aligned}
        h_i &= \frac{\mathbb{E}[X_i^2]}{2 \mathbb{E}[X_i]} + \frac{1}{2} = \frac{(2-\tau_i) \tau_i}{2 \tau_i^2} + \frac{1}{2} = \frac{1}{\tau_i} \\
        &= \frac{1}{p_i \prod_{j \in I_i} \big(1 - \frac{p_j}{1 + d_{ij}} \big)}.
    \end{aligned}
\end{align}

This equation shows that $h_i$ is uniquely determined by the position vector $\boldsymbol{r}$ and the transmission probabilities $\boldsymbol{p}$. We define the vector function $\Phi$ such that
\begin{equation}
    \boldsymbol{h} = \Phi(\boldsymbol{p}, \boldsymbol{r})
\end{equation}

and each element $h_i = \phi_i(\boldsymbol{p}, \boldsymbol{r})$ is given by \eqref{eq:ageofi}.

We are interested in the behavior of $\boldsymbol{h}$, how it is affected by spatial diversity in the network, and how to design policies that minimize AoI in the presence of this diversity. We begin by characterizing the set of achievable AoI in the next section, and will see that understanding this region helps us gain intuition and derive efficient policies and performance bounds.

\section{AoI Achievability Region}
\label{sec:achievability}

We define the set of achievable $\boldsymbol{h}$ for a given position vector $\boldsymbol{r}$ as
\begin{multline}
    H(\boldsymbol{r}) \triangleq \{ \boldsymbol{h} \in \mathbb{R}^N \ | \ h_i = \phi_i(\boldsymbol{p}, \boldsymbol{r}), \ 
    0 \leq p_i \leq 1, \ \forall \ i \} ,
\end{multline}

and the set of all possible $H(\boldsymbol{r})$ as
\begin{equation}
    \mathcal{H} \triangleq \{ H(\boldsymbol{r}) \ | \ 0 < r_i \leq 1, \ \forall \ i \}.
\end{equation}

When discussing general results that hold for any achievable region and are not explicitly dependent on $\boldsymbol{r}$, the index is dropped, and $H$ becomes a generic set in $\mathcal{H}$ with arbitrary position vector.

Clearly, any vector $\boldsymbol{h} \in H$ has components that are positive and always greater than $1$, so $H$ exists in the positive orthant with this lower bound in every dimension. Furthermore, because the AoI of a node can only reach $1$ when it succeeds with probability $1$ in each slot, in order for $H$ to reach $1$ in one dimension it must go to infinity in every other.

Intuitively, under an efficient policy, adjusting $\boldsymbol{p}$ to decrease the AoI of one node will increase the AoI of another. To formalize this notion, we define a set of fixed point equations mapping weights to a policy vector. Let
\begin{equation}
\label{eq:paretoboundary}
	f_i(p_i) \triangleq \frac{\lambda_i}{p_i} - \sum_{j \in I_i} \frac{\lambda_j}{1 + d_{ji} - p_i} = 0, \ \forall \ i
\end{equation}

for $p_i \in [0,1]$ and where $\lambda_i$ is the weight associated with $p_i$. For any set of weights $\boldsymbol{\lambda}$ in the positive orthant, one can use \eqref{eq:paretoboundary} to find the associated $\boldsymbol{p}$ vector. This set of functions plays an important role both in characterizing the boundary of $H$ and designing efficient policies, as shown next.

\begin{lemma}
\label{prop:pareto}
    For any position vector $\boldsymbol{r}$, there exists a Pareto boundary $H^*(\boldsymbol{r})$ to the set $H(\boldsymbol{r})$. For any vector of weights $\boldsymbol{\lambda}$ in the positive orthant, let
    \begin{equation}
        \tilde{p_i} = \min \{ p_i, 1 \}, \ \forall \ i,
    \end{equation}
    
    where $\boldsymbol{p}$ is the unique solution to \eqref{eq:paretoboundary}. The resulting AoI vector $\boldsymbol{h} = \Phi(\boldsymbol{\tilde{p}}, \boldsymbol{r}) \in H^*(\boldsymbol{r})$, i.e. it lies on the Pareto boundary.
    
    Furthermore, for every $\boldsymbol{h}^* = \Phi(\boldsymbol{p}^*, \boldsymbol{r}) \in H^*(\boldsymbol{r})$, there exists a vector $\boldsymbol{\lambda}$ in the positive orthant whose components sum to 1 and for which \eqref{eq:paretoboundary} yields the solution $\boldsymbol{p}^*$ when $p_i^* < 1$ for all $i$.
\end{lemma}

\begin{proof}
    This follows from \cite{gupta2012throughput}, where the authors showed that these results hold for throughput. Because of the inverse relationship between throughput and AoI, every point on the throughput Pareto boundary has a one-to-one mapping to a point in $H$. It is easy to see that these points form a Pareto boundary of $H$, denoted by $H^*$, and that this mapping holds in both directions. Because the results were shown to hold for the throughput boundary, they must also hold for $H^*$.
\end{proof}

    

    
    
    


\begin{corollary}
\label{cor:achievability}
    For a given position vector $\boldsymbol{r}$, an AoI vector $\boldsymbol{h}$ is achievable, i.e. belongs to the set $H(\boldsymbol{r})$, if and only if it lies on or above the Pareto boundary $H^*(\boldsymbol{r})$.
\end{corollary}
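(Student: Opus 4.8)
The plan is to pass to the throughput domain, where the region is compact and the geometry is cleaner, and then translate back through the coordinatewise bijection $\tau_i = 1/h_i$ supplied by \eqref{eq:ageofi}. Since $h_i = 1/\tau_i$ is strictly decreasing in $\tau_i$, ``lying on or above $H^*(\boldsymbol{r})$'' in AoI corresponds exactly to ``lying on or below the throughput Pareto boundary'' in $\boldsymbol{\tau}$. It therefore suffices to show that the achievable throughput region $T(\boldsymbol{r}) \triangleq \{\boldsymbol{\tau} : \tau_i = p_i\prod_{j\in I_i}(1 - p_j/(1+d_{ij})),\ \boldsymbol{p}\in[0,1]^N\}$ equals the set of points on or below its Pareto-maximal boundary $T^*(\boldsymbol{r})$, and then invert.

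For necessity, I would first observe that $T(\boldsymbol{r})$ is the image of the compact cube $[0,1]^N$ under the continuous map in \eqref{successprob}, hence compact. Compactness guarantees that every achievable $\boldsymbol{\tau}$ is weakly dominated by a Pareto-maximal point: maximizing the strictly monotone functional $\sum_i \tau_i$ over the closed, bounded set $\{\boldsymbol{\tau}' \in T(\boldsymbol{r}) : \boldsymbol{\tau}' \geq \boldsymbol{\tau}\}$ attains a maximizer $\hat{\boldsymbol{\tau}}$, which satisfies $\hat{\boldsymbol{\tau}} \geq \boldsymbol{\tau}$ and is Pareto-maximal in all of $T(\boldsymbol{r})$ (any point dominating $\hat{\boldsymbol{\tau}}$ would also lie in the set and have larger sum). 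Mapping back through $h_i = 1/\tau_i$, this says precisely that every $\boldsymbol{h} \in H(\boldsymbol{r})$ lies on or above $H^*(\boldsymbol{r})$; this direction is essentially the defining property of the Pareto boundary from Proposition \ref{prop:pareto}, now made rigorous by working in throughput space (note $H(\boldsymbol{r})$ itself is unbounded, since $p_i \to 0$ forces $h_i \to \infty$).

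The substantive direction is sufficiency, i.e. that $T(\boldsymbol{r})$ is downward comprehensive (free disposal): if $\boldsymbol{\tau}\in T(\boldsymbol{r})$ and $\boldsymbol{0}\leq\boldsymbol{\tau}'\leq\boldsymbol{\tau}$, then $\boldsymbol{\tau}'\in T(\boldsymbol{r})$. The obstacle is the coupling in \eqref{successprob}: naively lowering a single $p_i$ does reduce $\tau_i$, but it simultaneously relaxes each interference factor $(1 - p_i/(1+d_{ji}))$ and hence raises every other $\tau_j$, so one cannot move a single coordinate by adjusting $\boldsymbol{p}$ alone. I would resolve this with a thinning construction that decouples channel occupancy from update delivery: node $i$ continues to access the channel with its original probability $p_i$, so the transmit indicators $V_i \sim Ber(p_i)$ and thus the interference seen by every other node are unchanged, but only an independently chosen fraction $q_i\in[0,1]$ of its accesses carries a fresh, AoI-relevant packet. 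This scales $\tau_i$ to $q_i\tau_i$ while leaving each $\tau_j$, $j\neq i$, exactly fixed, so taking $q_i = \tau_i'/\tau_i$ realizes the reduced $i$th coordinate; applying this independently across coordinates reaches any target $\boldsymbol{\tau}'\leq\boldsymbol{\tau}$. Equivalently, one may invoke the comprehensive structure of the random-access throughput region underlying Proposition \ref{prop:pareto}.

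Combining the two directions and translating back through $\tau_i = 1/h_i$ yields that $H(\boldsymbol{r})$ is exactly the set of AoI vectors on or above $H^*(\boldsymbol{r})$. The only delicate points are the compactness argument for necessity, which I handle in throughput space, and the justification of free disposal for sufficiency; I expect the latter to be the main obstacle, and the thinning argument is designed to clear it.
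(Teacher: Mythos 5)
Your necessity direction is sound, and in fact more rigorous than the paper's own: by working in throughput space, where $T(\boldsymbol{r})$ is the continuous image of the compact cube under \eqref{successprob}, and maximizing $\sum_i \tau_i$ over $\{\boldsymbol{\tau}'\in T(\boldsymbol{r}) : \boldsymbol{\tau}'\geq\boldsymbol{\tau}\}$, you actually \emph{construct} the dominating Pareto point whose existence the paper's short contradiction argument implicitly assumes. The genuine gap is in your sufficiency direction. The set $H(\boldsymbol{r})$ is \emph{defined} as the image of $[0,1]^N$ under $\Phi(\cdot,\boldsymbol{r})$: a vector is ``achievable'' exactly when some $\boldsymbol{p}\in[0,1]^N$ maps to it via \eqref{eq:ageofi}. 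Your thinning construction steps outside this policy class. A node that occupies the channel with probability $p_i$ but delivers an AoI-relevant update only on an independent fraction $q_i$ of its transmissions has two knobs (channel occupancy and delivery rate), whereas the model behind $H(\boldsymbol{r})$ couples them through the single parameter $p_i$. So the operating point $(\tau_1,\dots,q_i\tau_i,\dots,\tau_N)$ is realized by a policy that is not in the class, and nothing in your argument shows it equals the image of some $\boldsymbol{p}'\in[0,1]^N$ under the map in \eqref{successprob} --- which is precisely what membership in $H(\boldsymbol{r})$, and hence the corollary, requires. You have proved free disposal for an \emph{enlarged} region, not for $T(\boldsymbol{r})$ itself.

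The conclusion is still true, but closing the gap requires working inside the parameterization rather than around it: to lower $\tau_i$ alone, reduce $p_i$ to some $s<p_i$ (which raises every other $\tau_j$), then iteratively lower the remaining $p_j$ to restore their throughputs; this iteration is componentwise decreasing and bounded below, hence convergent, and the limiting value of $\tau_i$ sweeps continuously from $\tau_i$ down to $0$ as $s$ does, so an intermediate value argument produces the required $\boldsymbol{p}'$. Alternatively, one can lean on the known comprehensiveness of random-access throughput regions --- the collision-channel result in \cite{massey1985collision} and the capture-model region behind Proposition \ref{prop:pareto} in \cite{gupta2012throughput} --- which is what the paper does, and what your closing sentence gestures at; but ``equivalently'' is unearned there, since that citation is a different argument from the thinning and is the one that actually carries the proof. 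To be fair, the paper's own reverse direction (``every point above the boundary is in the interior of $H$ because $H$ extends to infinity'') is similarly incomplete; still, as written your proposal does not establish membership in $H(\boldsymbol{r})$ as defined.
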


\begin{proof}
    We first show the forward direction, that if some $\boldsymbol{h}$ is achievable then it lies on or above $H^*$. Assume it did not. Then it must lie below some $\boldsymbol{h}' \in H^*$ such that $h_i \leq h_i'$ for all $i$, so $\boldsymbol{h}'$ cannot be Pareto optimal, which is a contradiction.
    
    To show the reverse direction, note that every point which lies above the Pareto boundary is in the interior of $H$, because $H$ extends to infinity. Every point on the boundary is achievable from Lemma \ref{prop:pareto}, and just as any point on the interior of the throughput region is achievable \cite{massey1985collision}, so is any point on the interior of $H$. Therefore, every point on or above $H^*$ is achievable.
\end{proof}

We now present the main result of this section.

\begin{theorem}
\label{th:convexH}
    The set $H(\boldsymbol{r})$ is convex for any $N$ and any position vector $\boldsymbol{r}$
\end{theorem}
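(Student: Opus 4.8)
The plan is to exploit a hidden convex structure: although $H(\boldsymbol{r})$ is the image of the box $[0,1]^N$ under the nonlinear map $\Phi$, each coordinate $\log h_i$ turns out to be jointly convex in $\boldsymbol{p}$, and this, combined with the arithmetic--geometric mean inequality and the upward comprehensiveness of $H(\boldsymbol{r})$ supplied by Corollary~\ref{cor:achievability}, lets me push a convex combination of two achievable points back inside the region. Concretely, I would fix two achievable vectors $\boldsymbol{h}^{(0)}, \boldsymbol{h}^{(1)} \in H(\boldsymbol{r})$, realized by policies $\boldsymbol{p}^{(0)}, \boldsymbol{p}^{(1)} \in (0,1]^N$ (finiteness of the $h_i$ forces every $p_i > 0$), fix $\alpha \in [0,1]$, and aim to show $\alpha \boldsymbol{h}^{(0)} + (1-\alpha)\boldsymbol{h}^{(1)} \in H(\boldsymbol{r})$.

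The first and main step is to establish that $\boldsymbol{p} \mapsto \log h_i(\boldsymbol{p})$ is convex on $(0,1]^N$. Taking logarithms in \eqref{eq:ageofi} gives $\log h_i = -\log p_i - \sum_{j \in I_i} \log\!\big(1 - \tfrac{p_j}{1+d_{ij}}\big)$, which is a sum of terms each depending on a single coordinate of $\boldsymbol{p}$. The term $-\log p_i$ is convex in $p_i$, and each term $-\log(1 - p_j/(1+d_{ij}))$ is convex in $p_j$ on the relevant range (its second derivative is $(1+d_{ij}-p_j)^{-2} > 0$, and $p_j \le 1 < 1 + d_{ij}$ keeps the argument positive). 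A separable sum of univariate convex functions is jointly convex, so $\log h_i$ is convex in $\boldsymbol{p}$; equivalently, each success probability $\tau_i$ is log-concave in $\boldsymbol{p}$. I expect this separability observation to be the crux, since it is exactly what rescues convexity even though the product form of $h_i$ is itself \emph{not} jointly convex.

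Given this, I would set $\boldsymbol{p} = \alpha \boldsymbol{p}^{(0)} + (1-\alpha)\boldsymbol{p}^{(1)}$, which again lies in $(0,1]^N$ as a convex combination of box points. Convexity of $\log h_i$ yields $\log h_i(\boldsymbol{p}) \le \alpha \log h_i^{(0)} + (1-\alpha)\log h_i^{(1)}$, i.e. $h_i(\boldsymbol{p}) \le (h_i^{(0)})^{\alpha}(h_i^{(1)})^{1-\alpha}$, and the weighted arithmetic--geometric mean inequality bounds the right-hand side by $\alpha h_i^{(0)} + (1-\alpha) h_i^{(1)}$. Hence $\Phi(\boldsymbol{p})$ is dominated coordinatewise by the convex combination $\alpha \boldsymbol{h}^{(0)} + (1-\alpha)\boldsymbol{h}^{(1)}$.

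To close the argument I would invoke Corollary~\ref{cor:achievability}: the realized vector $\Phi(\boldsymbol{p})$ lies in $H(\boldsymbol{r})$, hence on or above the Pareto boundary $H^*(\boldsymbol{r})$, and since the convex combination dominates it in every coordinate, that combination also lies on or above $H^*(\boldsymbol{r})$ and is therefore achievable. As $\boldsymbol{h}^{(0)}, \boldsymbol{h}^{(1)}$ and $\alpha$ were arbitrary, $H(\boldsymbol{r})$ is convex. The only points needing care are the degenerate policies with some $p_i = 0$, which give infinite age and fall outside $\mathbb{R}^N$; restricting attention to finite achievable vectors, as above, excludes them, and it is precisely the comprehensiveness furnished by Corollary~\ref{cor:achievability} that upgrades the coordinatewise domination into membership in $H(\boldsymbol{r})$.
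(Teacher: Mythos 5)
Your proof is correct, and it shares its skeleton with the paper's $N$-dimensional argument in Appendix~\ref{app:convexity}: both convex-combine the two underlying policy vectors, show that the resulting AoI vector is coordinatewise dominated by the corresponding convex combination of the endpoint AoI vectors, and then upgrade domination to membership using the comprehensiveness supplied by Corollary~\ref{cor:achievability}. The differences lie in how the domination inequality is obtained and packaged. The paper asserts (without proof) that each $\phi_i$ is jointly convex in $\boldsymbol{p}$ and then routes through abstract machinery: $\Phi$ is a convex vector-valued function, its epigraph is a convex set, the projection of that epigraph onto the $\boldsymbol{h}$-coordinates is convex, and that projection equals $H(\boldsymbol{r})$. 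You instead work directly with two achievable points and derive the inequality from log-convexity of $h_i$ --- immediate from the separable form of $\log h_i$ in \eqref{eq:ageofi} --- followed by the AM--GM inequality. This buys self-containedness: your argument proves, rather than cites, the convexity fact that drives everything, and it dispenses with the vector-epigraph formalism. One correction, though: your parenthetical claim that $h_i$ is \emph{not} jointly convex in $\boldsymbol{p}$ is false. Log-convexity implies convexity, so your own argument establishes exactly the joint convexity the paper asserts; alternatively, $h_i$ is the reciprocal of a product of positive affine functions of distinct coordinates, which is convex on the relevant box. The remark is harmless, since your proof never relies on it, but it should be deleted. It is also worth noting that your route is mildly stronger than needed: it shows $\Phi\big(\alpha\boldsymbol{p}^{(0)}+(1-\alpha)\boldsymbol{p}^{(1)}\big)$ is dominated by the coordinatewise \emph{geometric} mean of the endpoints, not merely the arithmetic one.
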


\begin{proof}
    We start with a geometric proof in two dimensions. Note that we cannot simply prove convexity by randomizing over policies, because random access policies are decentralized, and this randomization technique would require centralized coordination.
    
    In two dimensions and for a fixed $\boldsymbol{r}$, the vector $\boldsymbol{h} = (h_1, h_2)$ is completely characterized by the vector $\boldsymbol{p} = (p_1, p_2)$. In particular, from \eqref{eq:ageofi} both $h_1$ and $h_2$ are convex functions of $\boldsymbol{p}$.
    
    Consider any two points $\boldsymbol{h}^1 = (h_1^1, h_2^1)$ and $\boldsymbol{h}^2 = (h_1^2, h_2^2)$ on $H^*$. Each of these points is uniquely determined by some $\boldsymbol{p}^1$ and $\boldsymbol{p}^2$. Furthermore, because $\phi_i$ is a continuous function of $\boldsymbol{p}$, there is a continuous curve traced from $\boldsymbol{h}^1$ to $\boldsymbol{h}^2$ by
    \begin{equation}
    \label{eq:htrajectory}
        \big( \phi_1(\lambda \boldsymbol{p}^1 + (1-\lambda) \boldsymbol{p}^2, \boldsymbol{r}), \  \phi_2(\lambda \boldsymbol{p}^1 + (1-\lambda) \boldsymbol{p}^2, \boldsymbol{r}) \big),
    \end{equation}
    
    for $0 \leq \lambda \leq 1$. Without loss of generality let $h_1^1 \leq h_1^2$. Because $H^*$ is a Pareto boundary (from Lemma \ref{prop:pareto}), this implies $h_2^1 \geq h_2^2$, and the boundary is monotonically decreasing in the $(h_1, h_2)$ plane. Because $\phi_i$ is convex in $\textbf{p}$, by definition
    \begin{align}
        \begin{aligned}
            \phi_i(\lambda \boldsymbol{p}^1 + (1-\lambda) \boldsymbol{p}^2, \boldsymbol{r}) &\leq \lambda \phi_i(\boldsymbol{p}^1, \boldsymbol{r}) + (1-\lambda) \phi_i(\boldsymbol{p}^2, \boldsymbol{r}) \\
            &= \lambda h_i^1 + (1-\lambda) h_i^2,
        \end{aligned}
    \end{align}
    
    for $i = 1,2$ and $0 \leq \lambda \leq 1$, so each point on the curve is element wise less than or equal to the convex combination of $\boldsymbol{h}^1$ and $\boldsymbol{h}^2$. The points on the curve are by definition achievable, so the achievable region must exist below the tangent connecting $\boldsymbol{h}^1$ and $\boldsymbol{h}^2$, and contain the points on the curve. This holds for any two points $\boldsymbol{h}^1$ and $\boldsymbol{h}^2$, so the set $H$ must be convex in two dimensions. This argument is shown graphically in Figure \ref{fig:h_convexity}. The general proof for $N$ dimensions is given in Appendix \ref{app:convexity}.
    
    \begin{figure}
        \centering
        \includegraphics[width=0.3\textwidth]{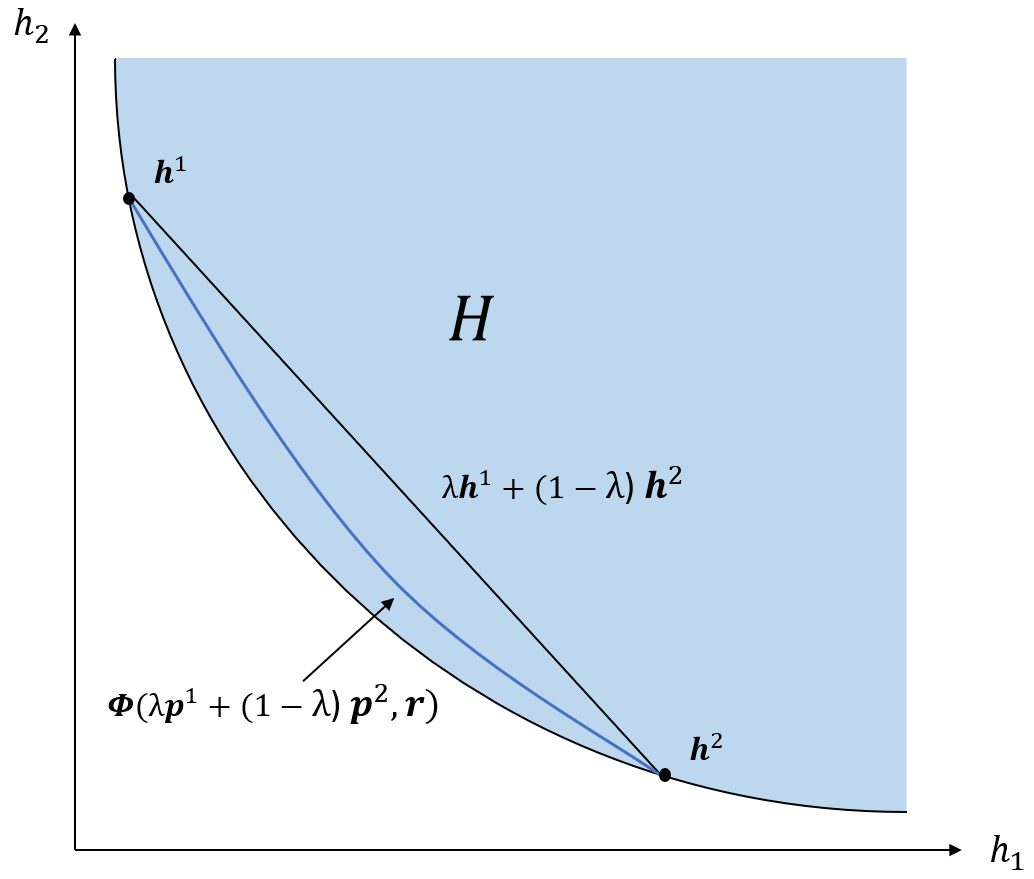}
        \caption{Curve traced between two points in $H^*$ showing convexity of the region $H$}
        \label{fig:h_convexity}
    \end{figure}
\end{proof}


\section{Optimizing AoI with Perfect Topology Information}
\label{sec:perfectinfo}

In the previous section, we characterized the achievable region of time average AoI, and showed that it is convex and has a Pareto boundary. Next we focus on designing policies to operate at specific points in that region. Intuitively, one would expect every policy vector of interest to lie on the Pareto boundary. We will see that this is the case and that one can operate at different points on the boundary to achieve different objectives.

In this section we assume perfect knowledge of the network topology, i.e. the position vector $\boldsymbol{r}$. Because interferer locations play a large role in the interference ratio $1/(1+d_{ij})$, and by extension play a large role in $h_i$, knowledge of $\boldsymbol{r}$ allows us to use spatial diversity to our advantage in designing policies.

\subsection{Expected Weighted Sum AoI}

We begin with the problem of minimizing the expected weighted sum AoI (EWSAoI) of the network over the probability vector $\textbf{p}$,
\begin{align}
    \begin{aligned}
        &\min_{\boldsymbol{p}} \ \sum_{i=1}^N \alpha_i h_i \\
        &\ \text{s.t.} \ 0 \leq p_i \leq 1, \ \forall \ i,
    \end{aligned}
\end{align}

where $\boldsymbol{\alpha} = (\alpha_i, \dots, \alpha_N)$ is the set of positive weights guiding the minimization. This is perhaps the most natural optimization problem for information freshness, and appears commonly in the AoI literature \cite{kadota2018scheduling} \cite{talak2018distributed} \cite{chen2022age}.

It is important to note that while expected AoI is the inverse of expected throughput for a single node, minimizing EWSAoI is not the same as maximizing expected weighted sum throughput. Consider the example of two nodes with $p_1 = 1$ and $p_2 = 0$. The sum throughput of this network is 1, which is optimal for single packet reception, but the expected AoI of node 2 grows without bound, driving the EWSAoI to infinity. This reinforces the motivation to use this metric as opposed to throughput.

Rewriting the problem in terms of \eqref{eq:ageofi}, it becomes
\begin{align}
\label{eq:ewsinp}
    \begin{aligned}
        \min_{\boldsymbol{p}} \ &\sum_{i=1}^N \frac{\alpha_i}{p_i \prod_{j \in I_i} \big(1 - \frac{p_j}{1 + d_{ij}} \big)} \\
        \text{s.t.} \ &0 \leq p_i \leq 1, \ \forall \ i.
    \end{aligned}
\end{align}

Note that this is a convex function in $\textbf{p}$ minimized over a convex set. As a result, this is a convex optimization problem and can be solved using a number of algorithms. Nevertheless, to gain insight into the solution we derive a closed-form expression.

\begin{theorem}
\label{th:ewsaoi}
    The solution to the EWSAoI minimization problem is given by
    \begin{equation}
        p_i^{EWS} = \min \{ \tilde{p_i}, 1 \} , \ \forall \ i,
    \end{equation}
    
    where $\tilde{p_i}$ is the solution to the fixed point equation
    \begin{equation}
    \label{eq:ewsopt}
        \frac{\alpha_i h_i}{\tilde{p_i}} - \sum_{j \in I_i} \frac{\alpha_j h_j}{1 + d_{ji} - \tilde{p_i}} = 0,
    \end{equation}

    and where $h_i$ is the resulting time average expected AoI of node $i$ under this policy.
\end{theorem}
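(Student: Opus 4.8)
The plan is to attack the convex program \eqref{eq:ewsinp} directly through its Karush--Kuhn--Tucker (KKT) conditions. Since the excerpt already establishes that the objective is convex in $\boldsymbol{p}$ and the feasible box $\{0 \le p_i \le 1\}$ is convex, and since the constraints are affine so that a strictly feasible point such as $p_i = 1/2$ satisfies Slater's condition, the KKT conditions are both necessary and sufficient for global optimality. I would first dispose of the lower bound: because $h_i \to \infty$ as $p_i \to 0^+$, the objective blows up on the face $p_i = 0$, so any optimizer has $p_i > 0$ and the constraint $p_i \ge 0$ is never active. Only the upper bound $p_i \le 1$ then needs a multiplier.

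The key computation is the gradient. Writing $L(\boldsymbol{p}) = \sum_k \alpha_k h_k$ with $h_k = \big(p_k \prod_{j \in I_k}(1 - \tfrac{p_j}{1+d_{kj}})\big)^{-1}$, I note that $p_i$ enters $L$ in exactly two ways: through the factor $1/p_i$ inside $h_i$, and through the factor $1 - p_i/(1+d_{ji})$ inside $h_j$ for each $j \in I_i$ (using that $i \in I_j \iff j \in I_i$). Logarithmic differentiation gives $\partial_{p_i} h_i = -h_i/p_i$ and $\partial_{p_i} h_j = h_j/(1 + d_{ji} - p_i)$, so that
\begin{equation}
\frac{\partial L}{\partial p_i} = -\frac{\alpha_i h_i}{p_i} + \sum_{j \in I_i} \frac{\alpha_j h_j}{1 + d_{ji} - p_i} = -f_i(p_i),
\end{equation}
where $f_i$ is precisely the left-hand side of \eqref{eq:ewsopt}. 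Introducing multipliers $\mu_i \ge 0$ for $p_i - 1 \le 0$, stationarity reads $-f_i(p_i) + \mu_i = 0$ and complementary slackness reads $\mu_i(p_i - 1) = 0$.

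I would then close the argument with a case analysis tied to the monotonicity of $f_i$. Regarding the optimal AoI values $h_j$ as fixed constants, $f_i(\cdot)$ is strictly decreasing on $(0,\, 1 + \min_{j} d_{ji})$, running from $+\infty$ (as $\tilde{p}_i \to 0^+$, driven by $\alpha_i h_i/\tilde{p}_i$) to $-\infty$ (as $\tilde{p}_i$ approaches the smallest $1 + d_{ji}$), so the root $\tilde{p}_i$ of \eqref{eq:ewsopt} exists and is unique. If the optimal $p_i^{EWS} < 1$, then $\mu_i = 0$, stationarity gives $f_i(p_i^{EWS}) = 0$, and uniqueness forces $\tilde{p}_i = p_i^{EWS} < 1$, so $\min\{\tilde{p}_i, 1\} = p_i^{EWS}$. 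If instead $p_i^{EWS} = 1$, then $f_i(1) = \mu_i \ge 0$; since $f_i$ is strictly decreasing, its root satisfies $\tilde{p}_i \ge 1$, whence $\min\{\tilde{p}_i, 1\} = 1 = p_i^{EWS}$. In both cases $p_i^{EWS} = \min\{\tilde{p}_i, 1\}$, as claimed.

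The main obstacle I anticipate is the self-referential nature of the fixed point equation: the $h_j$ appearing in \eqref{eq:ewsopt} are themselves the AoI values produced by the clamped policy, so the characterization is implicit and one must argue it is self-consistent rather than merely necessary. The strict monotonicity of $f_i$ in the active-constraint case is exactly what makes the projection $\min\{\cdot, 1\}$ reproduce the KKT optimum rather than approximate it, and some care is needed to confirm the denominators $1 + d_{ji} - p_i$ stay positive on $[0,1]$, which holds since $d_{ji} > 0$. As a sanity check and a link back to earlier results, setting $\lambda_i = \alpha_i h_i^{EWS}$ makes \eqref{eq:ewsopt} coincide with \eqref{eq:paretoboundary}, confirming via Proposition \ref{prop:pareto} that the optimizer lies on the Pareto boundary $H^*(\boldsymbol{r})$.
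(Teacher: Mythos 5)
Your proposal is correct, but it reaches the result by a genuinely different route than the paper. The paper never differentiates the objective of \eqref{eq:ewsinp} directly: it introduces auxiliary variables $h_i'$, moves each AoI term into a logarithmic constraint, forms the Lagrangian dual, invokes Slater's condition for strong duality, and reads off the saddle-point system, identifying the multipliers as $\lambda_i = \alpha_i h_i'$ so that stationarity lands exactly in the Pareto-boundary form \eqref{eq:paretoboundary}. You instead apply KKT conditions to the primal problem as stated, and your logarithmic differentiation, $\partial_{p_i} h_i = -h_i/p_i$ and $\partial_{p_i} h_j = h_j/(1+d_{ji}-p_i)$, shows that the gradient of the raw objective already has the fixed-point form of \eqref{eq:ewsopt}; this incidentally softens the paper's remark that working ``in this form'' is intractable --- what is intractable is an explicit closed-form solution for $\boldsymbol{p}$, not the stationarity system, and neither proof produces more than the fixed-point characterization anyway. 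Your handling of the clamped case is in fact tighter than the paper's: complementary slackness gives $f_i(1) = \mu_i \geq 0$, so strict monotonicity forces the root $\tilde{p}_i \geq 1$ and hence $\min\{\tilde{p}_i,1\} = 1 = p_i^{EWS}$, whereas the paper only gestures at monotonicity to justify the boundary solution. What the paper's dual reformulation buys is reusable machinery: the explicit identification $\lambda_i = \alpha_i h_i$ is what ties Theorem \ref{th:ewsaoi} to Proposition \ref{prop:pareto}, and the same log-transformed dual is reused essentially verbatim in the min-max proof (Appendix B) and in the performance bound of Theorem \ref{th:perfinfobound} (Appendix C), so the paper's detour pays off later. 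One caveat applies equally to both arguments: the characterization is self-referential, since the $h_j$ appearing in \eqref{eq:ewsopt} are themselves the AoI values under the optimal policy, so the per-coordinate existence and uniqueness of the root holds with the other coordinates fixed at the optimum rather than globally; you flag this explicitly, and the paper's proof carries the same implicit feature without comment.
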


Before proving this result, we first note that the structure of the solution is the same as \eqref{eq:paretoboundary}. Therefore, from Lemma \ref{prop:pareto} it must lie on the Pareto boundary of $H$ for any positive vector of weights, as expected.

\begin{proof}
    As noted previously, \eqref{eq:ewsinp} is a convex optimization problem. Optimization theory tells us that the unconstrained problem has a unique solution, which can be found by setting the gradient equal to zero and solving for $\boldsymbol{p}$ \cite{boyd2004convex}. If the solution satisfies the constraints, then it also solves the constrained problem, otherwise the solution lies on the boundary of the constraint set. In this form, however, the product in the denominator makes it intractable to find a closed-form solution, even for small values of $N$.
    
    By moving each term in the sum into a separate constraint and taking the log, the equivalent problem
    \begin{align}
        \begin{aligned}
            \min_{\boldsymbol{p}} \ &\sum_{i=1}^N \alpha_i h_i' \\
            \text{s.t.} \ &\log h_i' \geq \log \frac{1}{p_i \prod_{j \neq i} \big(1 - \frac{p_j}{1 + d_{ij}} \big)}, \ \forall \ i \\
            &0 \leq p_i \leq 1, \ \forall \ i
        \end{aligned}
    \end{align}
    
    is derived. Note that this is equivalent because minimizing the sum in the objective will drive each constraint on $\boldsymbol{h'}$ to equality. The Lagrangian dual of this problem is given by
    \begin{align}
        \begin{aligned}
            \min_{\boldsymbol{p}} \ &\sum_{i=1}^N \alpha_i h_i' + \sum_{i=1}^N \lambda_i \Bigg(\log \frac{1}{p_i \prod_{j \in I_i} \big(1 - \frac{p_j}{1 + d_{ij}} \big)} - \log h_i' \Bigg) \\
            \text{s.t.} \ &0 \leq p_i \leq 1, \ \lambda_i \geq 0, \ \forall \ i,
        \end{aligned}
    \end{align}
    
    where $\boldsymbol{\lambda}$ is the vector of Lagrange multipliers. This can be simplified further to
    \begin{align}
    \label{eq:ewsaoidual}
        \begin{aligned}
            \min_{\boldsymbol{p}} \ &\sum_{i=1}^N \Bigg( \alpha_i h_i' - \lambda_i \bigg(\log p_i\\
            &\hspace{4em} + \sum_{j \in I_i} \log \bigg(1 - \frac{p_j}{1 + d_{ij}} \bigg) + \log h_i' \bigg) \Bigg) \\
            \text{s.t.} \ &0 \leq p_i \leq 1, \ \lambda_i \geq 0, \ \forall \ i.
        \end{aligned}
    \end{align}
    
    Maximizing the solution to this problem over $\boldsymbol{\lambda}$ yields the solution to the dual problem, which is a lower bound on the solution to the primal. Because the primal problem is convex and the feasible set has a non-empty interior, Slater's condition is satisfied. Therefore, strong duality holds and this bound is tight \cite[Sec~5.2]{boyd2004convex}, making the maximum of this problem over $\boldsymbol{\lambda}$ equivalent to the primal problem.
    
    Because \eqref{eq:ewsaoidual} is convex in $\boldsymbol{p}$ and $\boldsymbol{h'}$, and concave in $\boldsymbol{\lambda}$, the solution is found by taking the gradient with respect to each and setting it equal to zero, thereby maximizing over $\boldsymbol{\lambda}$ and minimizing over $\boldsymbol{p}$ and $\boldsymbol{h'}$. This gives the solution
    \begin{equation}
        \begin{cases}
        \begin{aligned}
            &\frac{\lambda_i}{p_i} = \sum_{j \in I_i} \dfrac{\lambda_j}{1 + d_{ji} - p_i} \\
            &\lambda_i = \alpha_i h_i' \\
            &\log h_i' = -\log p_i - \sum_{j \in I_i} \log \bigg(1 - \dfrac{p_j}{1 + d_{ij}} \bigg)
        \end{aligned}
        \end{cases}
    \end{equation}
    
    for all $i$ and when $p_i \in [0,1]$. By taking the log of \eqref{eq:ageofi}, it can immediately be seen that $h_i' = h_i$ under the resulting policy, and that the Lagrange multipliers are equal to the weighted expected AoI. Combining these three equations yields \eqref{eq:ewsopt}.
    
    To see that this solution is unique, note that the left hand side of \eqref{eq:ewsopt} is monotonically decreasing in $p_i$ and goes to infinity as $p_i$ goes to $0$. If it becomes negative when $p_i = 1$, then by the Intermediate Value Theorem there exists a unique solution in the domain of $p_i$ given by \eqref{eq:ewsopt}. Otherwise the minimum occurs at $p_i = 1$ by the monotonicity of \eqref{eq:ewsopt}.
\end{proof}

\subsection{Min-Max Expected AoI}

Minimizing EWSAoI achieves some notion of fairness in the network, because the optimization will not let any single node's AoI grow too large, but we may sometimes be interested in a clearer notion of fairness. Min-max fairness for AoI is equivalent to max-min fairness for throughput and ensures the most evenly minimized AoI across the network. In particular, min-max optimization is defined as minimizing the maximum AoI in the network over the vector $\boldsymbol{p}$,
\begin{align}
\label{eq:mminp}
    \begin{aligned}
        \min_{\textbf{p}} \ &\max \ \{ h_i \} \\
        \text{s.t.} \ &\ 0 \leq p_i \leq 1, \ \forall \ i.
    \end{aligned}
\end{align}

This solution follows along similar lines to Theorem \ref{th:ewsaoi}.




\begin{theorem}
\label{th:minmaxsolution}
    The solution to the expected min-max AoI (MMAoI) problem is given by
    \begin{equation}
        p_i^{MM} = \min \{ \tilde{p_i}, 1 \} , \ \forall \ i,
    \end{equation}
    
    where $\tilde{p_i}$ is the solution to the fixed point equation
    \begin{equation}
    \label{eq:mmopt}
        \frac{\lambda_i}{\tilde{p_i}} - \sum_{j \in I_i} \frac{\lambda_j}{1 + d_{ji} - \tilde{p_i}} = 0,
    \end{equation}

    and $\boldsymbol{\lambda}$ is such that its entries sum to the log of the resulting AoI, $\log h^{MM} = \log h_i^{MM}$, for all $i$.
\end{theorem}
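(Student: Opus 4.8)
The plan is to mirror the treatment of Theorem~\ref{th:ewsaoi}. First I would recast the nonsmooth objective in \eqref{eq:mminp} in epigraph form, introducing a scalar $t$ and writing $\min_{\boldsymbol{p},t} t$ subject to $h_i \le t$ and $0 \le p_i \le 1$ for all $i$. Since each $h_i = \phi_i(\boldsymbol{p},\boldsymbol{r})$ from \eqref{eq:ageofi} is convex in $\boldsymbol{p}$ (as already used in Theorems~\ref{th:convexH} and~\ref{th:ewsaoi}), the pointwise maximum $\max_i h_i$ is convex, so this is a convex program over a convex set with nonempty interior. Slater's condition therefore holds and strong duality applies exactly as in the proof of Theorem~\ref{th:ewsaoi} \cite[Sec~5.2]{boyd2004convex}. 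As there, it is cleanest to take logs inside the age constraints, writing each as $\log t \ge -\log p_i - \sum_{j \in I_i}\log(1 - p_j/(1+d_{ij}))$, which renders the Lagrangian tractable.

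Next I would form the Lagrangian with multipliers $\lambda_i \ge 0$ and impose stationarity. Differentiating with respect to the epigraph variable $t$ fixes the overall scale of $\boldsymbol{\lambda}$; because \eqref{eq:mmopt} is invariant under a common positive rescaling of $\boldsymbol{\lambda}$, this scale is only a normalization, and we are free to adopt the one stated in the theorem. Differentiating with respect to each $p_i$ and collecting the single term from node $i$'s own constraint against the terms in which $p_i$ appears as an interferer in the other nodes' constraints gives $\lambda_i/p_i = \sum_{j \in I_i}\lambda_j/(1 + d_{ji} - p_i)$, which is precisely \eqref{eq:mmopt}; the symmetry of the interference set ($j \in I_i \iff i \in I_j$) is exactly what lets these cross terms assemble into that form. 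As in Theorem~\ref{th:ewsaoi}, the left-hand side is strictly decreasing in $p_i$ and diverges as $p_i \to 0$, so \eqref{eq:mmopt} has a unique root $\tilde{p}_i$ when that expression is negative at $p_i = 1$, and otherwise the constrained minimizer sits at the boundary, giving $p_i^{MM} = \min\{\tilde{p}_i, 1\}$.

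The crux, and what distinguishes this from Theorem~\ref{th:ewsaoi}, is showing that the minimizer equalizes the ages, $h_i^{MM} = h^{MM}$ for all $i$, rather than leaving some node with slack $h_i < h^{MM}$. I would argue this in two complementary ways. Dually, complementary slackness gives $\lambda_i(t - h_i) = 0$, so it suffices to show every $\lambda_i > 0$; but if some $\lambda_k = 0$, the stationarity equation \eqref{eq:mmopt} at index $k$ forces $0 = \sum_{j \in I_k}\lambda_j/(1 + d_{jk} - p_k)$, whose summands are nonnegative with strictly positive denominators (since $1 + d_{jk} - p_k \ge d_{jk} > 0$), contradicting the fact that the multipliers cannot all vanish. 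Primally, and this also covers any node clipped to $p_i = 1$ where stationarity becomes an inequality, if some node $k$ had $h_k < h^{MM}$ then decreasing $p_k$ slightly strictly raises $\tau_j$, and hence strictly lowers $h_j$, for every $j \ne k$ through the factor $(1 - p_k/(1 + d_{jk}))$, while $h_k$ rises only continuously and stays below $h^{MM}$ for a small enough perturbation; this lowers the maximum age, contradicting optimality. Either route establishes $\log h^{MM} = \log h_i^{MM}$ for all $i$, the condition pinning down the admissible $\boldsymbol{\lambda}$.

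Finally, since the resulting system has the same structure as \eqref{eq:paretoboundary}, Proposition~\ref{prop:pareto} confirms that $\boldsymbol{p}^{MM}$ lies on the Pareto boundary $H^*(\boldsymbol{r})$, as one expects of an efficient policy. I expect the main obstacle to lie entirely in the equalization step: the fixed-point form follows routinely from the KKT conditions, but ruling out slack nodes, especially those pinned at $p_i = 1$, requires the monotonicity and perturbation argument above rather than stationarity alone.
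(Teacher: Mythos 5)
Your proposal is correct and follows essentially the same route as the paper's proof: epigraph reformulation, logarithmic transformation of the age constraints, strong duality via Slater's condition, and KKT stationarity yielding the fixed point \eqref{eq:mmopt}, with the same monotonicity/Intermediate Value Theorem argument for uniqueness and clipping at $p_i = 1$. Two differences are worth noting. First, the paper obtains the stated normalization $\sum_i \lambda_i = \log h^{MM}$ automatically, by substituting $\tilde{\alpha} = -\log \alpha$ and minimizing $\frac{1}{2}\tilde{\alpha}^2$, so that stationarity in $\tilde{\alpha}$ gives $\tilde{\alpha} + \sum_i \lambda_i = 0$; your plain epigraph objective instead gives $\sum_i \lambda_i = h^{MM}$, and your appeal to the invariance of \eqref{eq:mmopt} under a common rescaling of $\boldsymbol{\lambda}$ is a valid way to recover the theorem's normalization. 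Second, the paper infers age equalization by setting the gradient in $\boldsymbol{\lambda}$ to zero, which implicitly assumes every constraint is active at the saddle point; you instead rule out zero multipliers explicitly (via stationarity in $p_k$ forcing all multipliers to vanish, contradicting stationarity in the epigraph variable) and back this up with a primal perturbation argument that also handles nodes clipped at $p_i = 1$. Your treatment is more careful on precisely the step the paper glosses over, so the proposal is, if anything, a slight strengthening of the published argument rather than a departure from it.
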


\begin{proof}
    See Appendix \ref{app:mmaoi}
\end{proof}

Note that the solution takes the form of \eqref{eq:paretoboundary}, and so lies on the Pareto boundary as expected. We also note that the resulting AoI is equal across all nodes, and is denoted by $h^{MM}$. The performance is shown through simulations in Section \ref{sec:simulations}.

\subsection{Proportionally Fair Expected AoI}

The last optimization metric of interest is proportionally fair AoI (PFAoI). Proportional fairness is defined as the maximum sum log of throughput \cite{kelly1997charging}, so equivalently is the minimum sum log of AoI,
\begin{align}
    \begin{aligned}
        &\min_{\boldsymbol{p}} \ \sum_{i=1}^N \log h_i \\
        &\ \text{s.t.} \ 0 \leq p_i \leq 1, \ \forall \ i.
    \end{aligned}
\end{align}

This can be rewritten in terms of \eqref{eq:ageofi} as
\begin{align}
\label{eq:pfinp}
    \begin{aligned}
        \min_{\boldsymbol{p}} \ &\sum_{i=1}^N \log \frac{1}{p_i \prod_{j \in I_i} \big(1 - \frac{p_j}{1 + d_{ij}} \big)} \\
        \text{s.t.} \ &0 \leq p_i \leq 1, \ \forall \ i.
    \end{aligned}
\end{align}

In similar fashion to EWSAoI and MMAoI, we derive a closed-form solution to this problem, and in fact see that it takes a simpler form.

\begin{theorem}
\label{th:pfsol}
    The solution to the PFAoI minimization problem is given by
    \begin{equation}
    \label{eq:pfopt}
        p_i^{PF} = \min \{ \tilde{p_i}, 1 \} , \ \forall \ i,
    \end{equation}
    
    where $\tilde{p_i}$ is the solution to the fixed point equation
    \begin{equation}
        \frac{1}{\tilde{p_i}} - \sum_{j \in I_i} \frac{1}{1 + d_{ji} - \tilde{p_i}} = 0, \ \forall \ i.
    \end{equation}
\end{theorem}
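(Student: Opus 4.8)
The plan is to exploit the logarithm in the objective, which turns the product in \eqref{eq:ageofi} into a sum and makes \eqref{eq:pfinp} \emph{separable} across nodes. Writing $\log h_i = -\log p_i - \sum_{j \in I_i} \log(1 - p_j/(1+d_{ij}))$, summing over $i$, and regrouping the resulting double sum according to which probability each term contains, I would rewrite the objective as
\begin{equation}
\sum_{i=1}^N \log h_i = \sum_{i=1}^N g_i(p_i),
\end{equation}
where $g_i(p_i) \triangleq -\log p_i - \sum_{j \in I_i} \log( 1 - p_i/(1+d_{ji}) )$, so that each summand depends on a single coordinate. On $[0,1]$ every argument $1 - p_i/(1+d_{ji})$ stays strictly positive since $p_i \le 1 < 1 + d_{ji}$, and each $g_i$ is a sum of convex functions of $p_i$, hence strictly convex. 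Thus \eqref{eq:pfinp} is a separable, strictly convex program over the box $[0,1]^N$.

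Because the problem separates over a box, I would minimize each $g_i$ independently. Its derivative is
\begin{equation}
g_i'(p_i) = -\frac{1}{p_i} + \sum_{j \in I_i} \frac{1}{1 + d_{ji} - p_i},
\end{equation}
and setting $g_i'(\tilde p_i) = 0$ gives precisely the claimed fixed point equation. In contrast to Theorem \ref{th:ewsaoi}, no Lagrangian dual or auxiliary variables are needed, and the stationarity conditions carry no weights and do not couple across nodes; this decoupling is exactly what produces the simpler form. Matching \eqref{eq:paretoboundary} with uniform weights $\lambda_i \equiv 1$ then confirms, via Proposition \ref{prop:pareto}, that the solution lies on the Pareto boundary.

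To obtain the constrained minimizer and the $\min\{\tilde p_i, 1\}$ clipping, I would argue as in Theorem \ref{th:ewsaoi}: $g_i'$ is continuous and strictly increasing on $(0,1]$ with $g_i'(p_i) \to -\infty$ as $p_i \to 0^+$. Hence either $g_i'(1) \ge 0$, in which case the Intermediate Value Theorem gives a unique interior root $\tilde p_i \in (0,1]$ that is the minimizer, or $g_i'(1) < 0$, in which case $g_i$ is still decreasing at the boundary so the minimizer is $p_i = 1$; projecting the root onto $[0,1]$ yields $p_i^{PF} = \min\{\tilde p_i, 1\}$. I expect the only real subtlety to be bookkeeping rather than analysis: correctly regrouping the double sum so that the interference index switches from $d_{ij}$ to $d_{ji}$, and confirming that projection onto the box is legitimate for a separable convex objective. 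There is no coupled fixed-point system to resolve here, so the argument is genuinely lighter than in the weighted-sum and min-max cases.
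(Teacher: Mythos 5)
Your proposal is correct and takes essentially the same route as the paper: the paper also observes that \eqref{eq:pfinp} is a convex program over the box and obtains the fixed point equation by setting the gradient to zero, with the clipping $\min\{\tilde{p_i},1\}$ handled by the same monotonicity/IVT argument used for Theorem \ref{th:ewsaoi}. Your explicit regrouping into separable single-variable functions $g_i(p_i)$ is just a transparent way of organizing that same gradient computation, and matches the separability property the paper itself highlights immediately after the theorem.
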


Once again we note that the solution takes the form of \eqref{eq:paretoboundary} with weights all equal to $1$, and so lies on the Pareto boundary from Lemma \ref{prop:pareto}.

\begin{proof}
    The problem in \eqref{eq:pfinp} can be rewritten as
    \begin{align}
        \begin{aligned}
            \min_{\boldsymbol{p}} \ &\sum_{i=1}^N \Bigg( - \log p_i - \sum_{j \in I_i} \log \bigg(1 - \frac{p_j}{1 + d_{ij}} \bigg) \Bigg) \\
            \text{s.t.} \ &0 \leq p_i \leq 1, \ \forall \ i,
        \end{aligned}
    \end{align}

    which is convex due to the convexity of the equation in $\boldsymbol{p}$ and the convexity of the set. Therefore, the minimum can be found directly by taking the gradient and setting it equal to zero, immediately yielding the result \eqref{eq:pfopt}. Uniqueness follows along similar lines to Theorem \ref{th:ewsaoi}.
    
\end{proof}

Not only does PFAoI have a closed-form solution similar to EWSAoI and MMAoI, it also has the advantage of being completely separable. This means the optimization can be performed in a distributed manner, with each node computing its own transmission probability and without sharing Lagrange multiplier values between nodes.


We conjecture that PFAoI will have similar performance to EWSAoI with symmetric weights. From Theorem \ref{th:ewsaoi}, the Lagrange multipliers in the EWSAoI solution with symmetric weights are equal to the resulting AoI of each node, and we know heuristically that when minimizing the sum, no single node's AoI will grow too large. As a result, the multipliers will be close to equal. Because scaling them all by a constant has no effect on the solution to \eqref{eq:paretoboundary}, this is the same as if they were all close to 1, i.e. the PFAoI solution \eqref{eq:pfopt}.

We will see through simulations that this is indeed the case, and while we provide no rigorous guarantees on performance, this heuristic argument combined with the separability of PFAoI makes it an appealing alternative to EWSAoI. We can, however, provide performance guarantees on EWSAoI and MMAoI, which we show next.

\subsection{Performance Bounds}

We denote the solution to the EWSAoI problem with symmetric weights by $\boldsymbol{p}^S(\boldsymbol{r})$, and the corresponding AoI vector by $\boldsymbol{h}^S(\boldsymbol{r})$. Here the dependence on the position vector $\boldsymbol{r}$ is shown explicitly.

\begin{theorem}
\label{th:perfinfobound}
    For any position vector $\boldsymbol{r}$, and with $\beta=2$ and $\theta=1$, the normalized AoI is bounded such that
    \begin{equation}
        1 \leq \frac{1}{N^2} \sum_{i=1}^N h_i^S(\boldsymbol{r}) \leq \frac{1}{N} h^{MM}(\boldsymbol{r}) \leq \frac{e}{2}
    \end{equation}
    
    as the size of the network $N$ goes to infinity. This upper bound is tight when all nodes are fixed on a circle an equal distance away from the base station.
\end{theorem}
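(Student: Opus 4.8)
The plan is to prove the displayed chain as three separate inequalities, where I expect the leftmost and middle bounds to hold exactly for every $N$ and only the rightmost to require the $N\to\infty$ limit. For the leftmost bound, the key observation is that with $\theta=1$ the capture model collapses to single reception: if two transmitting nodes $i$ and $k$ both cleared the threshold in one slot, then each lies in the other's interference set, so $\mathrm{SIR}_i>1$ would force $r_i^{-\beta}K_i^2 > r_k^{-\beta}K_k^2$ while $\mathrm{SIR}_k>1$ would force $r_k^{-\beta}K_k^2 > r_i^{-\beta}K_i^2$, a contradiction. Hence at most one node succeeds per slot, so $\sum_i \tau_i = \mathbb{E}[\sum_i s_i(t)] \le 1$ for every feasible $\boldsymbol{p}$. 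Since $h_i = 1/\tau_i$, the AM--HM (equivalently Cauchy--Schwarz) inequality gives $\sum_i h_i = \sum_i 1/\tau_i \ge N^2/\sum_i \tau_i \ge N^2$, and applying this to the feasible vector $\boldsymbol{h}^S(\boldsymbol{r})$ yields $\tfrac{1}{N^2}\sum_i h_i^S(\boldsymbol{r}) \ge 1$.

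For the middle inequality I would invoke the defining optimality of the two policies. By Theorem~\ref{th:ewsaoi} with symmetric weights, $\boldsymbol{h}^S(\boldsymbol{r})$ minimizes $\sum_i h_i$ over all feasible policies. The min-max policy of Theorem~\ref{th:minmaxsolution} is itself feasible and equalizes the AoI, so $h_i = h^{MM}(\boldsymbol{r})$ for every $i$ and $\sum_i h_i = N\,h^{MM}(\boldsymbol{r})$. Comparing the sum-optimal value against this feasible competitor gives $\sum_i h_i^S(\boldsymbol{r}) \le N\,h^{MM}(\boldsymbol{r})$, i.e.\ $\tfrac{1}{N^2}\sum_i h_i^S(\boldsymbol{r}) \le \tfrac{1}{N} h^{MM}(\boldsymbol{r})$.

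The rightmost bound is the crux, and I would recast it as a lower bound on the max-min throughput: it suffices to exhibit, for each $\boldsymbol{r}$, a feasible $\boldsymbol{p}$ under which every node attains $\tau_i \ge \tfrac{2}{eN}(1-o(1))$, since then $h^{MM}(\boldsymbol{r}) = 1/\min_i \tau_i \le \tfrac{e}{2}N(1+o(1))$. The target constant and the tightness come from the equidistant case $r_i\equiv r$, where $d_{ij}=1$ for all $i,j$; by symmetry every node uses the same $p$, so $\tau = p(1-p/2)^{N-1}$, which is maximized at $p=2/N$ and gives $\tau=\tfrac{2}{N}(1-\tfrac1N)^{N-1}\to \tfrac{2}{eN}$, i.e.\ $\tfrac1N h^{MM}\to \tfrac{e}{2}$. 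The structural fact that makes the general case tractable is that, with $\theta=1$, one has $\tfrac{1}{1+d_{ij}}+\tfrac{1}{1+d_{ji}} = 1$, so the interference weights satisfy $\sum_i \sum_{j\ne i}\tfrac{1}{1+d_{ij}} = \binom{N}{2}$; the per-node load $S_i := \sum_{j\ne i}\tfrac{1}{1+d_{ij}}$ therefore averages to exactly $\tfrac{N-1}{2}$, the common value forced by the equidistant geometry.

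The main obstacle is that a single uniform policy $p_i\equiv 2/N$ does not suffice: writing $\tau_i \approx \tfrac{2}{N}\exp(-\tfrac{2}{N}S_i)$, the guarantee is controlled by $\max_i S_i$, and although $S_i$ averages to $\tfrac{N-1}{2}$ it can be as large as $N-1$ for the most distant node, degrading the bound to $e^2/2$. The resolution I would pursue is a position-aware policy $p_i = \tfrac{2}{N}e^{a_i}$ with $a_i\ge 0$, in which heavily-interfered (distant) nodes boost their transmission probability; because such a node contributes little interference to the rest (its weights $\tfrac{1}{1+d_{ki}}=\tfrac{r_k^\beta}{r_k^\beta+r_i^\beta}$ are small), this boosting is essentially free and the binding constraint remains the bulk of nodes at comparable distances. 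Making this rigorous amounts to bounding the solution of the fixed-point system $\tfrac{2}{N}\sum_j e^{a_j}\tfrac{1}{1+d_{ij}} \le 1+a_i$ for all $i$ and showing the equidistant configuration is extremal --- i.e.\ that balancing the loads $S_i$ to their common average $\tfrac{N-1}{2}$ is what maximizes $h^{MM}$. This extremality step, which converts the global identity $\sum_i S_i=\binom{N}{2}$ into a per-node min-max guarantee and where I expect the $\beta=2$ hypothesis to enter in controlling the interference sums as $N\to\infty$, is where I anticipate the real work to lie.
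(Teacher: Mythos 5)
Your first two inequalities are correct and essentially coincide with the paper's argument: with $\theta=1$ at most one transmission can capture the channel per slot, so $\sum_i \tau_i \le 1$, and AM--HM gives $\sum_i h_i^S(\boldsymbol{r}) \ge N^2$ (the paper phrases this as a relaxed optimization with constraint $\sum_i \tau_i \le 1$ whose solution is $h_i = N$ for all $i$); the middle bound follows, as you say, by comparing the sum-optimal policy against the min-max policy as a feasible competitor, whose AoI is equalized across nodes. Both indeed hold for every $N$.

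The rightmost bound, however, is where the theorem actually lives, and your proposal leaves it unproven --- you say so yourself when you defer the ``extremality step'' as the place where the real work lies. The claim that the equidistant configuration is worst-case, i.e.\ that for \emph{arbitrary} $\boldsymbol{r}$ one can boost the heavily-interfered nodes via $p_i = \tfrac{2}{N}e^{a_i}$ and still guarantee $\min_i \tau_i \ge \tfrac{2}{eN}(1-o(1))$, is the entire content of the $e/2$ bound, not a technical residue. Your global identity $\sum_i S_i = \binom{N}{2}$ (which is correct for $\theta = 1$) cannot by itself deliver a per-node guarantee: it is consistent with badly unbalanced load profiles, which is exactly the regime where you showed the uniform policy degrades to $e^2/2$, and no mechanism is given for why boosting repairs the worst node without creating new violations elsewhere in the fixed-point system. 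The paper sidesteps this difficulty with a different device: it applies the minmax (weak duality) inequality to the Lagrangian dual of the MMAoI problem, $\max_{\boldsymbol{r}}\max_{\boldsymbol{\lambda}}\min_{\boldsymbol{p}} \tilde{h}^{MM} \le \min_{\boldsymbol{p}}\max_{\boldsymbol{\lambda}}\max_{\boldsymbol{r}} \tilde{h}^{MM}$, so the adversarial choice of $\boldsymbol{r}$ is made with $\boldsymbol{p}$ and $\boldsymbol{\lambda}$ held fixed. The inner maximization over $\boldsymbol{r}$ then decomposes into pairwise terms $g(i,j) = \lambda_i\log\bigl(1-\tfrac{p_j}{1+d_{ij}}\bigr) + \lambda_j\log\bigl(1-\tfrac{p_i}{1+d_{ji}}\bigr)$, and a candidate-point analysis (using $p_i \to 0$ as $N \to \infty$) shows each pair is extremized at $r_i = r_j$, so the all-equal vector $\boldsymbol{r}^*$ is simultaneously worst-case for every pair; symmetry then forces equal multipliers, reduces the problem to the proportionally fair one with $p_i^* = 2/N$, and gives $\tfrac{1}{2}(1-1/N)^{-(N-1)} \to e/2$. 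If you want to complete your route, you must either prove your extremality claim directly (hard, since the fixed point couples all nodes) or adopt this order-of-optimization swap, which is the idea that makes the worst-case geometry tractable.
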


\begin{proof}
    See Appendix \ref{app:perfinfobound}

    
\end{proof}

This bound provides tight guarantees on average AoI, and in the case of $h^{MM}$, fairness guarantees that \textit{every} node will lie below this bound. The driving motivation behind this work is the lack of fairness and the poor performance achieved by traditional random access policies in the presence of spatial diversity, but this result shows that when spatial diversity is built into the policy, it actually improves performance. Simulations in Section \ref{sec:simulations} further verify these results. In the next section, we examine the case where nodes do not have access to the position vector $\boldsymbol{r}$.

\section{Optimizing Topology Agnostic AoI}
\label{sec:unknownconf}

In the previous section, we examined policies that minimize different metrics of AoI under the assumption of perfect topology information, where the entire position vector $\boldsymbol{r}$ is known. In practical settings, this may not be the case. It is realistic to assume that each node has knowledge of its own position through GPS or another technology, but may be blind to the locations of other nodes. Moreover, in a distributed setting, nodes must be able to compute their transmission probabilities independently.

Clearly, without location information one expects some loss in performance. Therefore, our goal is to design a policy which minimizes this loss, while setting transmission probabilities for each node based only on its own location. We refer to this class of policies as \textit{topology agnostic}, and focus on proportionally fair AoI, because the policy \eqref{eq:pfopt} is completely decoupled and hence amenable to distributed implementation. Moreover, as we conjectured and show through simulation results in Section \ref{sec:simulations}, the PFAoI policy serves as a good proxy for minimizing EWSAoI.

\subsection{Topology Agnostic Proportionally Fair AoI}

To model unknown interferer locations, assume that nodes are uniformly distributed in $\mathbb{R}^2$, and that the locations of interfering nodes are random variables. We use capital $R_j$ to denote the distance of node $j$ from the base station. Similar analysis can be done for a non-uniform distribution of nodes, but we restrict our attention to the uniform setting for simplicity.

Because nodes are uniformly distributed over a circle of radius $1$, this circle has an area equal to $\pi$, and the probability that any node lies within some region in space is the area of that region divided by $\pi$. Therefore, the probability that node $j$ lies within a circle of radius $r_j$, equivalent to saying its distance from the base station is less than $r_j$, is
\begin{equation}
    \mathbb{P}[R_j \leq r_j] = r_j^2, \ \forall \ j.
\end{equation}

This is the CDF for the random variable $R_j$, and its pdf is given by its derivative
\begin{equation}
    f_{R_j}(r_j) = 2r_j, \ \forall \ j.
\end{equation}

We define the expected difference between our topology agnostic (TA) and proportionally fair objectives as the optimality gap, which quantifies the loss incurred by not knowing $\boldsymbol{r}$. The optimal TA policy is then given by the solution to
\begin{equation}
\label{eq:pfloss}
    \min_{\pi \in \Pi} \ \sum_{i = 1}^N \mathbb{E}[ \log h_i^{\pi}] - \sum_{i=1}^N \log h_i^{PF},
\end{equation}

where $\Pi$ is the class of topology agnostic policies. Here the expectation in each term of the sum is taken with respect to the interferer locations for that term, i.e. the expectation of $h_i$ is taken with respect to all $R_j$ for $j \in I_i$. The resulting policy is simple and elegant, as shown in the following theorem.

\begin{theorem}
    For large $N$, when nodes are uniformly distributed in $\mathbb{R}^2$, and when $\beta=2$ and $\theta=1$, the policy TA that minimizes the optimality gap \eqref{eq:pfloss} to proportionally fair AoI is given by
    \begin{equation}
    \label{eq:pipi}
        p_i^{TA} \approx \frac{1}{(N-1) \big(1 - r_i^2 \log (1 + \frac{1}{r_i^2}) \big)}, \ \forall \ i.
    \end{equation}
\end{theorem}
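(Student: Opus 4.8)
The plan is to exploit the additively separable structure of $\log h_i$ and reduce the functional minimization in \eqref{eq:pfloss} to a pointwise stationarity condition. First, observe that the second sum in \eqref{eq:pfloss} is the proportionally fair objective under full information, a constant independent of the policy $\pi$; hence it suffices to minimize $\sum_{i=1}^N \mathbb{E}[\log h_i^\pi]$. Taking the logarithm of \eqref{eq:ageofi} and setting $\beta=2$, $\theta=1$ so that $d_{ij}=R_j^2/r_i^2$, I would write
\begin{equation}
  \log h_i = -\log p_i - \sum_{j \in I_i}\log\Big(1 - \frac{p_j}{1 + R_j^2/r_i^2}\Big).
\end{equation}
Because a topology-agnostic policy sets $p_j = \pi(R_j)$ using only each node's own distance, and the interferer distances are i.i.d.\ with density $f_R(r)=2r$, the $N-1$ terms in the sum are identically distributed, giving $\mathbb{E}[\log h_i] = -\log\pi(r_i) - (N-1)\,\mathbb{E}_R\big[\log(1 - \pi(R)/(1+R^2/r_i^2))\big]$.

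Next, since the policy must be fixed before the configuration is revealed, I would treat all node positions as i.i.d.\ uniform and minimize the expected gap. By symmetry the objective collapses to $N\,\mathbb{E}[\log h_1]$, which I would express as a functional of $\pi$ by integrating against the density $2r$ in both the own-position and interferer-position variables. This functional is convex in $\pi$ (both $-\log\pi$ and $-\log(1-\pi/(1+\cdot))$ are convex in the policy value), so any stationary point is the unique global minimizer. Taking the functional derivative with respect to the value $\pi(s)$ — carefully accounting for the dual role of $\pi(s)$, once as the own transmission probability of a node located at $s$ and once as the interference such a node causes to every victim — the common weight $2s$ cancels and yields the exact condition
\begin{equation}
  \frac{1}{\pi(s)} = (N-1)\int_0^1 \frac{2t}{1 + s^2/t^2 - \pi(s)}\,dt.
\end{equation}

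Finally, I would pass to the large-$N$ regime: the minimizer is $O(1/N)$ uniformly in $s$, so $\pi(s)$ is negligible against $1 + s^2/t^2 \geq 1$ in the denominator and may be dropped with $O(1/N)$ relative error. The remaining integral is elementary; substituting $u=t^2$,
\begin{equation}
  \int_0^1 \frac{2t}{1+s^2/t^2}\,dt = \int_0^1 \frac{u}{u+s^2}\,du = 1 - s^2\log\Big(1 + \frac{1}{s^2}\Big),
\end{equation}
so that $1/\pi(s) \approx (N-1)\big(1 - s^2\log(1+1/s^2)\big)$, and renaming $s\to r_i$ produces \eqref{eq:pipi}.

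The main obstacle I anticipate is making the functional optimization rigorous: correctly bookkeeping the dual role of $\pi(s)$ so that the density factors cancel and the pointwise Euler--Lagrange condition emerges cleanly, and then justifying the large-$N$ approximation — in particular, establishing that the exact minimizer is uniformly $O(1/N)$ (so that discarding $\pi(s)$ from the denominator is valid and the constraint $\pi(s)\le 1$ is inactive), and checking the boundary behavior as $s\to 0$ and $s\to 1$, where the factor $1 - s^2\log(1+1/s^2)$ must remain bounded away from $0$ for the expression to be well defined.
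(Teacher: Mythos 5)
Your proposal is correct and follows essentially the same route as the paper: drop the constant proportionally fair term, collect the terms in which each node's own probability appears (your ``dual role'' bookkeeping is exactly the paper's re-indexing of the double sum), condition on the node's own distance, and evaluate the same integral $\int_0^1 \frac{2t}{1+s^2/t^2}\,dt = 1 - s^2\log\big(1+\tfrac{1}{s^2}\big)$. The only differences are cosmetic: the paper applies the large-$N$ approximation $\log(1-x)\approx -x$ to the objective and then optimizes exactly, whereas you derive the exact stationarity condition first and then drop the $O(1/N)$ term $\pi(s)$ from its denominator---equivalent at leading order, and your own worries (uniform $O(1/N)$ bound and boundedness of $1-s^2\log(1+1/s^2)$ away from zero) are easily dispatched since that factor is at least $1-\log 2$ on $[0,1]$.
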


\begin{proof}
    The problem \eqref{eq:pfloss} is equivalent to minimizing the first summation, so the second sum can be dropped, and the problem can be rewritten in terms of \eqref{eq:ageofi} as
    \begin{align}
        \begin{aligned}
            \min_{\boldsymbol{p}} \ &\sum_{i = 1}^N \mathbb{E}_{ \boldsymbol{R}} \Bigg[ \log \frac{1}{p_i \prod_{j \in I_i} (1 - \frac{p_j}{1 + D_{ij}})} \Bigg] \\
            \text{s.t.} \ &0 \leq p_i \leq 1, \ \forall \ i,
        \end{aligned}
    \end{align}
    
    where $D_{ij} = R_j^{\beta}/(R_i^{\beta} \theta) = (R_j/R_i)^2$ under our assumptions. Rearranging the objective function further,
    \begin{multline}
            \sum_{i = 1}^N \mathbb{E}_{\boldsymbol{R}} \bigg[ -\log p_i - \sum_{j \in I_i} \log \bigg(1 - \frac{p_j}{1 + (R_j/R_i)^2} \bigg) \bigg] \\
            = \sum_{i = 1}^N \mathbb{E}_{\boldsymbol{R}} \bigg[ -\log p_i - \sum_{j \in I_i} \log \bigg(1 - \frac{p_i}{1 + (R_i/R_j)^2} \bigg) \bigg],
    \end{multline}
    
    where in the re-indexed sums on the right hand side, $p_i$ only appears in one term of the outer sum, for each $i$.
    
    Recall that topology agnostic policies assume a distributed implementation, where each node has access to its own $r_i$. Thus, re-indexing the sum effectively decouples the optimization so that each node can set its $p_i$ by minimizing a single term in the outer sum, independent of other nodes. Because each node has access to its location, the expectation can be conditioned on $r_i$ in each term, and the problem becomes
    \begin{align}
    \label{eq:expectedpf}
        \begin{aligned}
            \min_{\boldsymbol{p}} \ &\sum_{i = 1}^N \mathbb{E}_{\boldsymbol{R}} \bigg[ -\log p_i - \sum_{j \in I_i} \log \bigg(1 - \frac{p_i}{1 + \frac{R_i^2}{R_j^2}} \bigg) \ | \ R_i = r_i \bigg] \\
            \text{s.t.} \ &0 \leq p_i \leq 1, \ \forall \ i.
        \end{aligned}
    \end{align}
    
    A single term of the sum with its expectation written explicitly is equal to
    \begin{equation}
        -\log p_i - \sum_{j \in I_i} \ \int_0^1 2r_j \log \bigg(1 - \frac{p_i}{1 + r_i^2/r_j^2} \bigg) dr_j.
    \end{equation}
    
    This integral does not have a closed-form solution, but can be closely approximated using $\log(1-x) \approx -x$ for small $x$ to eliminate the log in the integral. This is valid for large $N$, because $p_i$ will be small and the denominator is greater than $1$. Then
    \begin{multline}
        \sum_{j \in I_i} \int_0^1 2r_j \log \bigg(1 - \frac{p_i}{1 + r_i^2/r_j^2} \bigg) dr_j \\
        \begin{aligned}
            &\approx - \sum_{j \in I_i} \int_0^1 2r_j \bigg(\frac{p_i}{1 + r_i^2/r_j^2} \bigg) dr_j \\
            &= - (N-1) p_i \big(1 - r_i^2 \log (1 + 1/r_i^2 ) \big).
        \end{aligned}
    \end{multline}
    
    Plugging this result back into \eqref{eq:expectedpf}, the optimization becomes
    \begin{align}
        \begin{aligned}
            \min_{\boldsymbol{p}} \ &\sum_{i = 1}^N \big( -\log p_i + (N-1) p_i \big(1 - r_i^2 \log (1 + 1/r_i^2 ) \big) \big) \\
            \text{s.t.} \ &0 \leq p_i \leq 1, \ \forall \ i.
        \end{aligned}
    \end{align}
    
    This problem is convex in $\boldsymbol{p}$, and the solution \eqref{eq:pipi} is found by setting the gradient equal to $0$ and solving. 
\end{proof}

Under this policy, each node sets its transmission probability based only on its own location and the number of nodes in the network, using a clean, closed-form expression. This makes it a simple and attractive policy.

\subsection{Asymptotic Results and Convergence}

Next, we quantify how far the true optimal $p_i^{PF}$ deviates from $p_i^{TA}$ in \eqref{eq:pipi} as $N$ goes to infinity, by computing a probability distribution on $p_i^{PF}$.

\begin{lemma}
\label{th:pfconvergence}
    When nodes are uniformly distributed in $\mathbb{R}^2$, and when $\beta=2$ and $\theta=1$, the inverse of the proportionally fair optimal $p_i^{PF}$ converges to a truncated normally distributed random variable as $N$ goes to infinity. Specifically,
    \begin{equation}
        \label{eq:pfconvergence}
        Z_i^{PF} = \frac{1}{p_i^{PF}} \to \mathcal{N} (N \mu_i, N \sigma_i^2 )
    \end{equation}
    
    on the interval $[1, \infty)$, with a probability mass spike at $1$, and where
    \vspace{-6pt}\begin{align}
        \label{eq:pfmean}
        &\mu_i = 1 - r_i^2 \log \bigg(1 + \frac{1}{r_i^2} \bigg), \\ 
        \label{eq:pfvar}
        &\sigma_i^2 = 1 - \frac{1}{1+r_i^2} - \Bigg( r_i^2 \log \bigg(1 + \frac{1}{r_i^2} \bigg) \Bigg)^2.
    \end{align}
\end{lemma}

\begin{proof}
    Recall from Theorem \ref{th:pfsol} that the optimal proportionally fair expected AoI policy is given by the fixed point equation 
    \begin{equation*}
        Z_i^{PF} = \sum_{j \in I_i} \frac{1}{1 + D_{ji} - p_i} \approx \sum_{j \in I_i} \frac{1}{1 + D_{ji}}, \ \forall \ i
    \end{equation*}
     
    when the resulting $p_i \leq 1$, and where $D_{ji}$ is now a random variable. The approximation is valid because $p_i$ is small when $N$ is large. Conditioned on the value of $r_i$, each term in this sum becomes an iid random variable. As $N$ goes to infinity, applying the Central Limit Theorem shows that the sum converges to a normal distribution,
    \begin{equation}
        Z_i^{PF} \to \mathcal{N} (N \mu_i, N \sigma_i^2 ), \ \forall \ i,
    \end{equation}
    
    where $\mu_i$ and $\sigma_i^2$ are the mean and variance of the iid terms in the sum, conditioned on $r_i$. These can be computed as \eqref{eq:pfmean} and \eqref{eq:pfvar} respectively.
    
    This distribution has a non-zero probability that $Z_i^{PF}$ takes a value larger than $1$, which corresponds to when $\tilde{p}_i > 1$ in \eqref{eq:pfopt}. This clearly cannot happen because $p_i$ is a probability, and as a result the normal distribution in \eqref{eq:pfconvergence} is only valid on the interval $[1, \infty)$. The remaining probability mass is contained in the spike
    \begin{equation}
    \label{eq:pmfmass}
        \mathbb{P} [Z_i^{PF} = 1] = \mathbb{P} \bigg[ R_i^2 \geq \sum_{j \in I_i} R_j^2 \ | \ R_i= r_i \bigg],
    \end{equation}
    
    because these two events are equivalent. This can be shown by evaluating $f_i(1)$ in \eqref{eq:paretoboundary} and noting that $f_i$ is a decreasing function of $p_i$. Then the condition $p_i > 1$ is equivalent to $f(p_i) < f(1)$, and after some algebraic manipulation, this becomes the event in \eqref{eq:pmfmass}. Because $N$ is large, this condition has negligible probability. Nevertheless, it yields a valid probability distribution, with the negligible mass that would have existed in the negative tail of the distribution occurring at this spike.
    
    This subtlety is clearly necessary because probabilities cannot take values larger than $1$. Furthermore, this ensures that the mean and variance of the distribution \eqref{eq:pfconvergence} are well defined as $N$ goes to infinity, because the distribution only takes positive values. This completes the proof.
\end{proof}

Note that the mean of $Z_i^{PF}$ is approximately equal to $1/p_i^{TA}$, so the policy $TA$ which minimizes the expected optimality gap is equivalent to taking the mean of the distribution of $Z_i^{PF}$. As shown next, when $N$ goes to infinity, the optimal $p_i^{PF}$ converges to this mean with high probability.

\begin{theorem}
\label{th:pfconvrate}
    As $N$ becomes large, the random variable $p_i^{PF}$ converges to $p_i^{TA}$ such that
    \begin{equation}
        \mathbb{P} \bigg[ | p_i^{PF} - p_i^{TA} | \leq \frac{k}{N^{3/2}}  \bigg] = 1 - \epsilon
    \end{equation}
    
    for some constant $k$ and any $\epsilon > 0$. In other words, the TA policy $p_i^{TA}$ is asymptotically optimal with arbitrarily high probability, and converges at a rate of $1/N^{3/2}$.
\end{theorem}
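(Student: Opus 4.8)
The plan is to leverage the distributional characterization of $Z_i^{PF} = 1/p_i^{PF}$ from Theorem \ref{th:pfconvergence} and convert a deviation bound on $Z_i^{PF}$ into a deviation bound on $p_i^{PF} = 1/Z_i^{PF}$. The key observation is that $1/p_i^{TA} = (N-1)\mu_i$ is exactly the mean of $Z_i^{PF}$: up to the negligible $p_i$ correction dropped in the proof of Theorem \ref{th:pfconvergence}, $Z_i^{PF}$ is a sum of $N-1$ iid terms $1/(1+D_{ji})$, each with conditional mean $\mu_i$ given $R_i = r_i$. Thus the problem reduces to showing that $Z_i^{PF}$ concentrates around its mean at the scale $\sqrt{N}$, and then transferring that fluctuation through the reciprocal map.

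First I would establish a $\sqrt{N}$-scale concentration bound on $Z_i^{PF}$. Each summand $1/(1+D_{ji})$ lies in $[0,1]$, so either Hoeffding's inequality (non-asymptotic) or the Central Limit Theorem already invoked in Theorem \ref{th:pfconvergence} (asymptotic) supplies a constant $c_\epsilon$, depending only on $\epsilon$, such that with probability at least $1-\epsilon$ we have $|Z_i^{PF} - (N-1)\mu_i| \leq c_\epsilon \sqrt{N}$. The Gaussian route gives $c_\epsilon = z_\epsilon \sigma_i$, where $z_\epsilon$ is the two-sided standard normal quantile and $\sigma_i^2$ is as in \eqref{eq:pfvar}; Hoeffding gives a looser but self-contained constant. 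Next I would pass from $Z_i^{PF}$ to $p_i^{PF}$ via the identity $|p_i^{PF} - p_i^{TA}| = |1/Z_i^{PF} - 1/((N-1)\mu_i)| = |Z_i^{PF} - (N-1)\mu_i| / (Z_i^{PF}(N-1)\mu_i)$. On the concentration event the numerator is at most $c_\epsilon\sqrt{N}$, while $Z_i^{PF} \geq (N-1)\mu_i - c_\epsilon\sqrt{N} \geq \tfrac{1}{2}(N-1)\mu_i$ for $N$ large, so the denominator is of order $N^2$. This yields $|p_i^{PF} - p_i^{TA}| \leq 2c_\epsilon\sqrt{N}/((N-1)^2\mu_i^2) \leq k/N^{3/2}$, with $k$ absorbing $c_\epsilon$, $\mu_i$, and the $(N-1)^2$ versus $N^2$ discrepancy.

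The \textbf{main obstacle} is controlling the denominator $Z_i^{PF}(N-1)\mu_i$, which requires $Z_i^{PF}$ to stay bounded away from zero; this is exactly what the concentration event delivers, since the $O(\sqrt{N})$ fluctuation is dominated by the $O(N)$ mean, but it is the step where the reciprocal map could otherwise blow up the bound. A secondary subtlety is the probability spike at $Z_i^{PF}=1$ from Theorem \ref{th:pfconvergence}: on that event $p_i^{PF}=1$ while $p_i^{TA}\to 0$, so the bound fails pointwise. However, this spike carries probability $\mathbb{P}[R_i^2 \geq \sum_{j\in I_i} R_j^2 \mid R_i = r_i]$, which is $o(1)$ and hence absorbed into $\epsilon$ for large $N$; the same large-$N$ reasoning justifies ignoring the dropped $p_i$ term in the fixed-point denominator. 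The resulting $k$ depends on $\epsilon$ (through $c_\epsilon$) and on $r_i$ (through $\mu_i$ and $\sigma_i$), consistent with the statement's ``for some constant $k$ and any $\epsilon>0$,'' and the $\sqrt{N}$ concentration scale combined with the $1/N^2$ factor from the two order-$N$ factors in the denominator is precisely what produces the claimed $1/N^{3/2}$ rate.
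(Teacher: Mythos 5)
Your proposal is correct and follows essentially the same route as the paper's proof: invoke the $\mathcal{N}(N\mu_i, N\sigma_i^2)$ concentration of $Z_i^{PF}$ from Theorem \ref{th:pfconvergence} to get an $O(\sqrt{N})$ deviation event of probability $1-\epsilon$, then push that fluctuation through the reciprocal map, where the $O(N^2)$ denominator yields the $1/N^{3/2}$ rate. Your version is in fact slightly more careful than the paper's in two respects — you explicitly bound the denominator $Z_i^{PF}(N-1)\mu_i$ away from zero on the concentration event, and you account for the probability spike at $Z_i^{PF}=1$ by absorbing it into $\epsilon$ — but these are refinements of, not departures from, the same argument.
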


\begin{proof}
    
    
    
    
    Because $ Z_i^{PF}$ is normally distributed according to \eqref{eq:pfconvergence}, the event that it lies within $m$ standard deviations of the mean is
    \begin{equation}
    \label{eq:stddevevent}
        |  Z_i^{PF} - N \mu_i | \leq m \sqrt{N} \sigma_i, 
    \end{equation}
    
    and for any fixed value of $m$, the probability that this occurs can be written as
    \begin{equation}
        \mathbb{P} \big[ |  Z_i^{PF} - N \mu_i | \leq m \sqrt{N} \sigma_i \big] = 1 - \epsilon_m,
    \end{equation}
    
    where $\epsilon_m$ decreases with increasing $m$. Substituting $p_i^{PF}$ and solving for it in \eqref{eq:stddevevent},
    \begin{equation}
        \frac{1}{N \mu_i + m \sqrt{N} \sigma_i} \leq p_i^{PF} \leq \frac{1}{N \mu_i - m \sqrt{N} \sigma_i}.
    \end{equation}
    
    Now rearranging the lower bound to be in the form
    \begin{align}
        \begin{aligned}
            \frac{1}{N \mu_i + m \sqrt{N} \sigma_i} &= \frac{1}{N \mu_i} - p_i^L
            \approx p_i^{TA} - p_i^L,
        \end{aligned}
    \end{align}
    
    and doing some algebraic manipulation,
    \begin{align}
        \begin{aligned}
            p_i^L &= \frac{1}{N \mu_i} - \frac{1}{N \mu_i + m \sqrt{N} \sigma_i}
            \leq  \frac{k}{N^{3/2}}
        \end{aligned}
    \end{align}
    
    for any finite $m$ and some constant $k$ as $N$ goes to infinity. Following the same approach for the upper bound $p_i^{TA} + p_i^U$, one can see that $p_i^U$ is on the same order. Therefore, the event
    \begin{equation}
        | p_i^{PF} - p_i^{TA} | \leq \frac{k}{N^{3/2}}
    \end{equation}
    
    is equivalent to \eqref{eq:stddevevent} for any finite $m$, and occurs with probability 
    \begin{equation}
        \mathbb{P} \bigg[ | p_i^{PF} - p_i^{TA} | \leq \frac{k}{N^{3/2}}  \bigg] = 1 - \epsilon_m
    \end{equation}
    
    where $\epsilon_m$ becomes arbitrarily small as $m$ increases. This completes the proof.
\end{proof}

We have shown that with high probability, the optimal proportionally fair policy converges to the policy TA at a rate of $1/N^{3/2}$. It is important to note that this is faster than $p_i^{TA}$ converges to 0, which occurs at a rate of $1/N$. 

Therefore, for large $N$, we expect the policy TA to achieve similar performance to the PFAoI policy (which uses all node locations), and we show through simulations in the next section that this is the case. This is a powerful result, which shows that a decoupled policy in which each node sets its transmission probability based \textit{only on its own location} asymptotically achieves proportional fairness with arbitrarily high probability, and furthermore converges quickly. In the next section, we verify this convergence and the performance of all our policies through simulations.

\section{Numerical Results}
\label{sec:simulations}

In this section, we evaluate the performance of our policies using normalized expected AoI, defined as $h_i/N$, as a comparison metric. In all simulations of EWSAoI, the weights are set equal to $1$. In Figure \ref{fig:policy_comp_vs_N}, the network average normalized AoI is shown as a function of $N$, averaged over 100 random, uniformly distributed topologies. For each sample, the topology was generated and nodes were added incrementally to observe the change with $N$. The EWS, PF, and MM policies all achieve ostensibly the same network average AoI, which approaches a constant value below the EWS and MM upper bound (UB) of $e/2$. As expected, the topology agnostic policy (TA) achieves worse performance for small $N$, but converges quickly to the other policies.

In Figure \ref{fig:p_comp_vs_r}, the transmission probabilities of each policy are shown as a function of distance from the base station. Here one node is fixed at $r_i$ and the rest of the topology is randomly generated from a uniform distribution. The EWS, PF, and MM policies are averaged over 1000 such topologies. We see that the four curves have the same general shape, increasing with $r_i$ as expected, but with small variations. As predicted, $\boldsymbol{p}^{PF}$ and $\boldsymbol{p}^{TA}$ are very close even at $N=50$, demonstrating the convergence in Theorem \ref{th:pfconvrate}.

Finally, Figure \ref{fig:h_comp_vs_r} shows the normalized expected AoI as a function of distance, averaged over the same 1000 topologies, and compared to traditional slotted ALOHA. Although the network average for the four policies is very close, there is some disparity across the network and one can observe how the max AoI changes under the different policies. The MM policy is flat as expected, achieving complete fairness, but the other three policies exhibit a small amount of unfairness. Intuitively, nodes close to the base station perform well because of their high signal strength and nodes far from the base station perform well because of their large transmission probability. This creates the concavity in the plot and causes nodes in the middle to perform the worst. The plot highlights the relative fairness of all four policies over slotted ALOHA, which sees more than a factor of four increase in AoI as $r_i$ increases from $0.125$ to $1$.

As conjectured, the PF policy closely mimics the EWS policy and justifies its use as a proxy. Furthermore, even for small values of $N$, the topology agnostic policy TA achieves similar performance. We conclude that PF is a good proxy for minimizing sum or min-max AoI, while also easily implementable in a distributed system. Furthermore, when topology information is unknown, TA is a good approximation to PF for moderate and larger sized networks.

\begin{figure}
    \centering
    \includegraphics[width=0.42\textwidth]{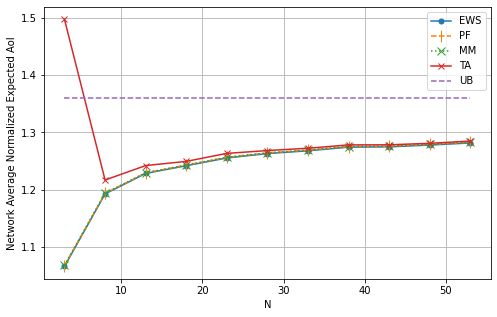}
    \caption{Normalized network average expected AoI for varying $N$ under each policy, averaged over 100 random network topologies}
    \label{fig:policy_comp_vs_N}
\end{figure}

\begin{figure}
    \centering
    \includegraphics[width=0.42\textwidth]{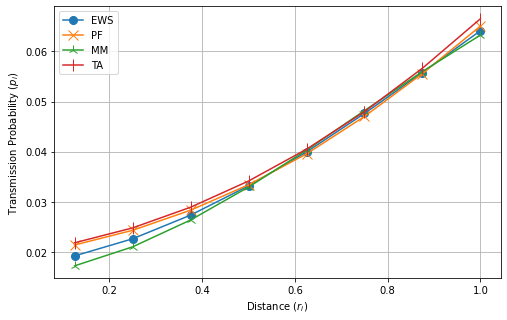}
    \caption{Transmission probability under each policy as a function of $r_i$, for $N=50$ and averaged over 1000 random network topologies}
    \label{fig:p_comp_vs_r}
\end{figure}

\begin{figure}
    \centering
    \includegraphics[width=0.42\textwidth]{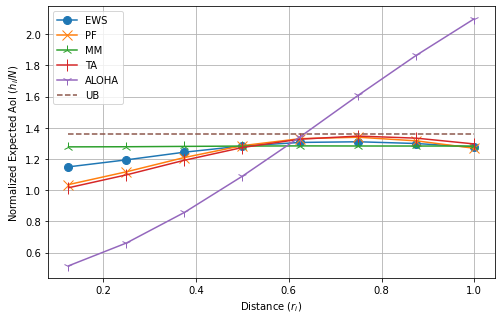}
    \caption{Normalized expected AoI under each policy as a function of $r_i$, for $N=50$ and averaged over 1000 random network topologies}
    \label{fig:h_comp_vs_r}
\end{figure}

\section{Conclusion}

In this paper we examined AoI in spatially distributed networks under a random access policy. We characterized the achievable region of time average AoI and showed that it's convex. We designed policies to minimize weighted sum, min-max, and proportionally fair AoI with full knowledge of the network topology, and showed tight performance bounds on weighted sum and min-max. We further derived a policy to minimize proportionally fair AoI when nodes only have access to their own locations, and showed that as the size of the network goes to infinity, it converges to the proportionally fair policy. We then verified our results through simulations.

Possible future directions include combining this work with threshold ALOHA policies derived in \cite{chen2022age, yavascan2021analysis} or a version of CSMA, extending our analysis to unsaturated networks where nodes don't always have updates to send, and implementing our policies in a real system to compare against 802.11 and other protocols.

\appendix 
\subsection{Proof of Theorem \ref{th:convexH}}
\label{app:convexity}

We have shown that the set $H$ is convex in two dimensions. We now show that this property extends to $N$ dimensions using the convexity of the vector function $\Phi$ and the relation of $H$ to the epigraph of this function. The convexity of a vector valued function is a natural generalization of the convexity of a scalar function \cite{dattorro2010convex}, and holds if and only if the following conditions are met.

$(i)$ The relationship
\begin{equation}
    \Phi(\lambda \boldsymbol{p}^1 + (1-\lambda) \boldsymbol{p}^2, \boldsymbol{r}) \preceq \lambda \Phi(\boldsymbol{p}^1, \boldsymbol{r}) + (1-\lambda) \Phi(\boldsymbol{p}^2, \boldsymbol{r})
\end{equation}
    
is true for all $\boldsymbol{p}^1, \boldsymbol{p}^2$ in the domain of $\Phi$ and all $0 \leq \lambda \leq 1$. Here the symbol $\preceq$ is defined as element-wise comparison.

$(ii)$ The domain of $\Phi$ is a convex set.

In the case of the set $H$, $(i)$ holds by the convexity of $\phi_i$, and $(ii)$ holds because the domain of $\Phi$ is a cube in $\mathbb{R}^N$, where each element takes values from 0 to 1.
    
The epigraph of the vector function $\Phi$ is defined in \cite{dattorro2010convex} as
\begin{equation}
    (\text{epi} \ \Phi)(\boldsymbol{r}) \triangleq \{ (\boldsymbol{p}, \boldsymbol{h}) \in \mathbb{R}^N \times \mathbb{R}^N \ | \ \boldsymbol{p} \in \text{dom} \ \Phi, \ \Phi(\boldsymbol{p}, \boldsymbol{r}) \preceq \boldsymbol{h} \},
\end{equation}
    
or the set of all $(\boldsymbol{p}, \boldsymbol{h})$ pairs such that $\boldsymbol{p}$ is in the domain of $\Phi$ and $\boldsymbol{h}$ is element-wise greater than or equal to $\Phi(\boldsymbol{p}, \boldsymbol{r})$.

$\Phi$ is a convex function, so $\text{epi} \ \Phi$ is a convex set from \cite{dattorro2010convex}. Furthermore, the projection of a convex set onto some of its coordinates is convex \cite{boyd2004convex}, so the projection of $\text{epi} \ \Phi$ onto $H(\boldsymbol{r})$ is convex.

It remains to show that the projection of epi $\Phi$ onto $H(\boldsymbol{r})$ is equal to the set $H(\boldsymbol{r})$. Consider any achievable point $\boldsymbol{h}' \in H(\boldsymbol{r})$. By definition, there exists some $\boldsymbol{p}'$ such that $\boldsymbol{h}' = \Phi(\boldsymbol{p}', \boldsymbol{r})$. Therefore $(\boldsymbol{p}', \boldsymbol{h}') \in \text{epi} \ \Phi$ and $\boldsymbol{h}'$ is in its projection.

To show the opposite direction, consider any $\boldsymbol{h}'$ in the projection of epi $\Phi$. There exists by definition some $\boldsymbol{\tilde{p}} \in \text{dom} \ \Phi$ such that $\boldsymbol{\tilde{h}} = \Phi(\boldsymbol{\tilde{p}}, \boldsymbol{r}) \preceq \boldsymbol{h}'$, and $\boldsymbol{\tilde{h}} \in H(\boldsymbol{r})$. From Corollary \ref{cor:achievability}, a point is achievable iff it lies above some point $\boldsymbol{h}^*$ on the Pareto Boundary, so $\boldsymbol{h}^* \preceq \boldsymbol{\tilde{h}} \preceq \boldsymbol{h}'$, and $\boldsymbol{h}' \in H(\boldsymbol{r})$. This completes the proof.
    
\subsection{Proof of Theorem \ref{th:minmaxsolution}}
\label{app:mmaoi}

Rewriting the problem \eqref{eq:mminp} in terms of \eqref{eq:ageofi} and moving each term in the maximum to the constraint set,
    
\begin{align}
    \begin{aligned}
        \min_{\boldsymbol{p}} \ &\alpha \\
        \text{s.t.} \ &\alpha \geq \frac{1}{p_i \prod_{j \in I_i} (1 - \frac{p_j}{1 + d_{ij}})}, \ \forall \ i \\
        &0 \leq p_i \leq 1, \ \forall \ i.
    \end{aligned}
\end{align}

This problem is convex in $\boldsymbol{p}$, because the constraints form a convex set. As shown in \cite{baccelli2013adaptive}, taking the log of the constraints and setting $\tilde{\alpha} = -\log \alpha$ forms the equivalent problem

\begin{align}
    \begin{aligned}
        \min_{\boldsymbol{p}} \ &\alpha \\
        \text{s.t.} \ &\tilde{\alpha} \leq \log p_i + \sum_{j \in I_i} \log (1 - \frac{p_j}{1 + d_{ij}}), \ \forall \ i \\
        &0 \leq p_i \leq 1, \ \forall \ i.
     \end{aligned}
\end{align}

Because log is a monotonic function, the objective can be rewritten as a maximization over $\tilde{\alpha}$. Recognizing that $\alpha \geq 1$, and therefore $\tilde{\alpha}$ is negative, this is equivalent to minimizing its square.

\begin{align}
    \begin{aligned}
        \min_{\boldsymbol{p}} \ &\frac{1}{2} \tilde{\alpha}^2 \\
        \text{s.t.} \ &\tilde{\alpha} \leq \log p_i + \sum_{j \in I_i} \log (1 - \frac{p_j}{1 + d_{ij}}), \ \forall \ i \\
        &0 \leq p_i \leq 1, \ \forall \ i.
     \end{aligned}
\end{align}

Taking the Lagrangian dual,

\begin{align}
\label{eq:minmaxdual}
    \begin{aligned}
        \min_{\boldsymbol{p}} \ &\frac{1}{2} \tilde{\alpha}^2 + \sum_{i=1}^N \lambda_i \big( \tilde{\alpha} - \log p_i - \sum_{j \in I_i} \log (1 - \frac{p_j}{1 + d_{ij}}) \big) \\
        \text{s.t.} \ &0 \leq p_i \leq 1, \ \forall \ i \\
        &\lambda_i \geq 0, \ \forall \ i.
     \end{aligned}
\end{align}

The primal problem is convex and Slater's condition is satisfied \cite{boyd2004convex}, so the duality gap is once again zero and the maximum of the solution over $\boldsymbol{\lambda}$ is equivalent to the primal problem. Because \eqref{eq:minmaxdual} is convex in $\boldsymbol{p}$ and $\tilde{\alpha}$ and concave in $\boldsymbol{\lambda}$, we take the gradient and set it equal to zero to find the solution

\begin{subnumcases}{}
    \frac{\lambda_i}{p_i} = \sum_{j \in I_i} \frac{\lambda_j}{1 + d_{ji} - p_i} & \\
    \tilde{\alpha} + \sum_{i=1}^N \lambda_i = 0  \\
    \begin{aligned}
        \tilde{\alpha} &= \log p_i + \sum_{j \in I_i} \log (1 - \frac{p_j}{1 + d_{ij}}) \\
        &= - \log h_i
    \end{aligned} & \label{subeq:mmalpha}
\end{subnumcases}

for all $i$. Observe from \eqref{subeq:mmalpha} that $\tilde{\alpha}$ does not depend on $i$, so the min max solution achieves an equal expected AoI for all nodes.

This solution is unique by the same argument as EWSAoI. \eqref{eq:mmopt} is monotonically decreasing in $p_i$ and goes to infinity as $p_i$ goes to $0$. If the left hand side evaluated at $p_i = 1$ is less than $0$, then by the Intermediate Value Theorem there exists a unique solution in the domain $0 \leq p_i < 1$. Otherwise the minimum occurs at $p_i = 1$ by the monotonicity of \eqref{eq:mmopt}. This completes the proof.

\subsection{Proof of Theorem \ref{th:perfinfobound}}
\label{app:perfinfobound}

We begin by showing the lower bound. Because $\theta = 1$, only one packet can be received in each time slot, and the sum of expected throughputs is upper bounded by $1$. In this setting the problem
\begin{align}
\label{eq:relaxedminsum}
    \begin{aligned}
        \min \ &\sum_{i=1}^N h_i(\boldsymbol{r}) \\
        \text{s.t.} \ &\sum_{i=1}^N \tau_i(\boldsymbol{r}) \leq 1.
    \end{aligned}
\end{align}

is a relaxed version of the EWSAoI optimization with symmetric weights, so the solution is a lower bound to \eqref{eq:ewsopt}. Using the inverse relationship between AoI and throughput, the solution to \eqref{eq:relaxedminsum} is $h_i = N$ for all $i$. Therefore,
\begin{equation}
    N^2 \leq \sum_{i=1}^N h_i^S(\boldsymbol{r}),
\end{equation}

and normalizing by $N$ gives the result. Furthermore,

\begin{equation}
	\frac{1}{N^2} \sum_{i = 1}^N h_i^S(\boldsymbol{r}) \leq \frac{1}{N^2} \sum_{i=1}^N h^{MM}(\boldsymbol{r}) = \frac{1}{N} h^{MM}(\boldsymbol{r})
\end{equation}

by the definition of min sum. It remains to prove the upper bound, and we begin by showing it holds asymptotically as $N$ goes to infinity.

As shown previously, the dual of the MMAoI optimization \eqref{eq:minmaxdual} has a duality gap of zero, and maximizing this problem over $\boldsymbol{\lambda}$ is equivalent to the primal MMAoI problem. Let the objective function in \eqref{eq:minmaxdual} be $\tilde{h}^{MM}(\boldsymbol{r})$. Then by the minmax inequality,
\begin{equation}
\label{eq:minmaxineq}
    \max_{\boldsymbol{r}} \max_{\boldsymbol{\lambda}} \min_{\boldsymbol{p}} \tilde{h}^{MM}(\boldsymbol{r}) \leq \min_{\boldsymbol{p}} \max_{\boldsymbol{\lambda}} \max_{\boldsymbol{r}} \tilde{h}^{MM}(\boldsymbol{r}).
\end{equation}

Taking the inner maximization of the right hand side and dropping the terms that are independent of $\boldsymbol{r}$,
\begin{equation}
\label{eq:maxrobj}
    \max_{\boldsymbol{r}} \sum_{i=1}^N \lambda_i \big( -\sum_{j \in I_i} \log (1 - \frac{p_j}{1 + d_{ij}}) \big).
\end{equation}

Defining the function
\begin{equation}
    g(i,j) \triangleq \lambda_i \log (1 - \frac{p_j}{1 + d_{ij}}) + \lambda_j \log (1 - \frac{p_i}{1 + d_{ji}}),
\end{equation}

\eqref{eq:maxrobj} can be rewritten as
\begin{equation}
\label{eq:gsum}
    \min_{\boldsymbol{r}} \sum_{(i,j) : i < j} g(i,j).
\end{equation}

We now show that for each $(i,j)$ pair, $g(i,j)$ is minimized for any vector $\boldsymbol{\lambda}$ when $r_i = r_j$. The function is not convex, but is continuous and differentiable, so a necessary condition for a minimum point is that the gradient be zero or to be a boundary point. Taking the derivative with respect to $r_i$ and setting it equal to zero gives
\begin{equation}
    r_i = 0, \ r_i = r_j \sqrt{\frac{p_i \lambda_j -p_j \lambda_i +p_i p_j \lambda_i}{p_j \lambda_i - p_i \lambda_j + p_i p_j \lambda_j}} = r_j \gamma.
\end{equation}

Adding the other boundary point at $r_i = 1$ yields three candidate solutions in terms of $r_i$, and by symmetry the same three candidates for $r_j$. Noting that $\gamma = 1$ when $r_i = r_j$, the possible combinations in both coordinates are $(x,x), (0,1), (1,0), (x, \gamma x), \text{and} (\gamma x, x)$, where $0 \leq x \leq 1$.

If there exists a vector $\boldsymbol{r}^*$ such that $g(i,j)$ is minimized for all $(i,j)$ pairs, then $\boldsymbol{r}^*$ is the solution to \eqref{eq:gsum}. It remains to prove that such a position vector exists.

Solving for $d_{ij}$ and $d_{ji}$ at the candidate minima,
\begin{equation}
    g(i,j) = \begin{cases}
        \lambda_j \log(1 - \frac{p_i}{2}) + \lambda_i \log(1 - \frac{p_j}{2}), &(x,x) \\
        \lambda_j \log(1 - p_i), &(0,1) \\
	  \lambda_i \log(1 - p_j), &(1,0) \\
	  \lambda_j \log (1 - \frac{p_i}{1 + 1/\gamma^2}) + \lambda_i \log (1 - \frac{p_j}{1 + \gamma^2}), &(x, \gamma x) \\
        \lambda_j \log (1 - \frac{p_i}{1 + \gamma^2}) + \lambda_i \log (1 - \frac{p_j}{1 + 1/\gamma^2}), &(\gamma x, x).
    \end{cases}
\end{equation}

From \eqref{eq:ewsopt},
\begin{equation}
	\frac{\lambda_i}{p_i} = \sum_{j \in I_i} C_j,
\end{equation}

when the resulting $p_i \leq 1$, and where $C_j > 0$ for all $j$ and $\lambda_i > 0$ for all $i$. As $N$ goes to infinity so does $|I_i|$, and therefore $p_i \to 0$ for all $i$.

%
%

Taking the limit of the solutions at $(\gamma x, x)$ and $(x, \gamma x)$,
\begin{multline}
	\lim_{p_i \to 0, p_j \to 0} g(i,j) = \\ \lambda_j \log (1 + \frac{\lambda_i}{\lambda_i + \lambda_j}) + \lambda_i \log(1 + \frac{\lambda_j}{\lambda_j + \lambda_i})  > 0
\end{multline}

for both cases. For the other cases,
\begin{equation}
	\lim_{p_i \to 0, p_j \to 0} g(i,j) = 0,
\end{equation}

so the minima occur at $(x,x), (0,1)$, and $(1,0)$, and setting $\boldsymbol{r}^*$ equal to any vector with equal entries $x$ such that $0 \leq x \leq 1$ is a solution to \eqref{eq:gsum}.

Returning to the upper bound in \eqref{eq:minmaxineq}, the maximization over $\boldsymbol{\lambda}$ does not affect the minimization over $\boldsymbol{p}$ because the problem is symmetric at $\boldsymbol{r}^*$ and therefore all values in the vector $\boldsymbol{\lambda}$ are equal. Setting $\lambda_i = 1$ for all $i$ makes the problem equivalent to the PFAoI problem, and from the solution \eqref{eq:pfopt},
\begin{equation}
    \frac{1}{p_i^*} = \sum_{j \in I_i} \frac{1}{1 + d_{ji} - p_i^*}  = \sum_{j \in I_i} \frac{1}{2 - p_i^*} , \ \forall \ i,
\end{equation}

which yields
\begin{equation}
	p_i^* = \frac{2}{N}, \ \forall \ i.
\end{equation}

Plugging this result into \eqref{eq:ageofi} gives the final upper bound
\begin{align}
\label{eq:minmaxbound}
\begin{aligned}
    \frac{1}{N} h^{MM}(\boldsymbol{r}) &\leq \frac{1}{N p_i^* \prod_{j \in I_i} (1 - \frac{p_j^*}{1 + d_{ij}})} \\
    &= \frac{1}{2 \prod_{j \in I_i} (1 - 1/N)} \\
    &= \frac{1}{2(1 - \frac{1}{N})^{N-1}} \xrightarrow{} \frac{e}{2}
\end{aligned}
\end{align}

for any position vector $\textbf{r}$ as $N$ goes to infinity. This completes the proof.

\bibliographystyle{IEEEtran}
\bibliography{minimizing_aoi_spatial_networks}

\end{document}